\definecolor{shadecolor}{rgb}{0.95, 0.95, 0.86}
\definecolor{darkgreen}{rgb}{0.2, 0.5,  0}
\def\&{\vspace{-5pt}&}
\def\Tr{ {\rm Tr}}
\def\tr{{\rm tr}}
\def\reg{{\rm reg}}
\def \SS{\mathfrak{S}}
\def \GG{\mathcal{G}}
\def \eqref#1{(\ref{#1})}
\def \& {&\hspace{-10pt}}
\def \wt{\widetilde}
\newcommand{\bt}{\beta}
\newcommand{\G}{\Gamma}
\renewcommand{\d}{\mathrm d}
\newcommand{\p}{\partial}
\newcommand{\br}{{\mathbb R}}
\newcommand{\nn}{\nonumber} 
\newcommand{\g}{\mathfrak{g}}
\newtheorem{theorem}{Theorem}[section]
\newtheorem{example}[theorem]{Example}
\newtheorem{exercise}[theorem]{Exercise}
\newtheorem{conjecture}[theorem]{Conjecture}
\newtheorem{lemma}[theorem]{Lemma}
\newtheorem{remark}[theorem]{Remark}
\newtheorem{proposition}[theorem]{Proposition} 
\newtheorem{corollary}[theorem]{Corollary} 
\newtheorem{definition}[theorem]{Definition}
\def\le{\left}
\def\ri{\right}
\def\bt{\begin{theorem}}
\def\et{\end{theorem}}
\def\bc{\begin{corollary}}
\def\ec{\end{corollary}}
\def\bx{\begin{example}\small}
\def\ex{\end{example}}
\def\bxr{\begin{exercise}\small}
\def\exr{\end{exercise}}
\def\bl{\begin{lemma}}
\def\el{\end{lemma}}
\def\bd{\begin{definition}}
\def\ed{\end{definition}}
\def\bp{\begin{proposition}}
\def\ep{\end{proposition}}
\def\br{\begin{remark}}
\def\er{\end{remark}}
\def\be{\begin{eqnarray}}
\def\ee{\end{eqnarray}}
\def\&{\hspace{-15pt}&}
\def\bea{\begin{eqnarray}}
\def\eea{\end{eqnarray}}
\def\beas{\begin{eqnarray*}}
\def\eeas{\end{eqnarray*}}
\newcommand{\CC}{\mathbb{C}}
\def\L{\mathcal L}
\def\l{\lambda }
\def\1{{\bf 1}}
\newcommand{\h}{\mathfrak{h}}
\def\QED {\hfill $\Box$\par\vskip 5pt}
\newcommand{\Ker}{\mathrm{Ker}}
\newcommand{\ad}{\mathrm {ad}}
\begin{document}
\title{Simple Lie algebras and topological ODEs}
\author{Marco Bertola, Boris Dubrovin, Di Yang}
\date{}
\maketitle
\begin{abstract}
For a simple Lie algebra $\g$ we define  a system of linear ODEs with polynomial coefficients, which we call the topological equation of $\g$-type. 
The dimension of the space of solutions regular at infinity is equal to the rank of 
the Lie algebra. 
For the simplest example $\g=sl_2(\mathbb C)$ the regular solution can be expressed via products of Airy functions and their derivatives; 
this matrix valued function was used in our previous work \cite{BDY1} for computing logarithmic derivatives of the Witten--Kontsevich tau-function.
For an arbitrary simple Lie algebra we construct a basis in the space of regular solutions to the topological equation called {\it generalized Airy resolvents}.
We also outline applications of the generalized Airy resolvents to computing the Witten and Fan--Jarvis--Ruan invariants of the Deligne--Mumford moduli spaces of stable algebraic curves.
\end{abstract}

{\small \noindent \textbf{Keywords.} simple Lie algebra; topological equation; dual topological equation;  generalized Airy resolvents; normal form; essential series.}

\section{Introduction}
In this paper to  an arbitrary simple Lie algebra we will associate two families of special functions defined by certain systems of linear ODEs with polynomial coefficients along with appropriate asymptotic conditions. Let us begin with the first family.

\subsection{Topological ODE and generalized Airy resolvents}

Let $(\g,[\cdot,\cdot])$ be a simple Lie algebra over $\CC$ of rank $n$. We sometimes refer to elements of $\g$ as  matrices, even if  the constructions will not depend on the choice of a matrix realization of the Lie algebra.
Choose $\h$ a Cartan subalgebra of $\g$, $\Pi=\{\alpha_1,\dots,\alpha_n\}\subset\h^*$ a set of simple roots, and $\triangle\subset \h^*$ the root system.
Then $\g$ admits the root space decomposition
\be
\g=\h\oplus \bigoplus_{\alpha\in\triangle} \g_{\alpha}.
\ee
Let $\theta$ be the highest root w.r.t. $\Pi$.
Denote by $(\cdot|\cdot): \g \times\g\rightarrow \mathbb{C}$ the  {\it normalized} Cartan--Killing form \cite{Kac} such that
$(\theta|\theta)=2.$
For any root $\alpha$, let $H_{\alpha}\in \h$ denote the root vector of $\alpha$ w.r.t. $(\cdot|\cdot)$.
Choose a set of Weyl generators $E_i\in\g_{\alpha_i},\,F_i\in\g_{-\alpha_i}$, $H_i={2H_{\alpha_i}}/{(\alpha_i|\alpha_i)}$ satisfying
\bea
[E_i,F_i]=H_i, \quad [H_i,E_j]=A_{ij}\,E_j,\quad [H_i,F_j]=-  \, A_{ij}\,F_j
\eea
where $(A_{ij})$ denotes the Cartan matrix associated to $\g$. Here and below, free Latin indices take integer values from $1$ to $n$ unless otherwise indicated.
Choose $E_{-\theta}\in \g_{-\theta},\, E_\theta \in\g_\theta$. They can be normalized by the conditions  
$(E_\theta\,|\,E_{-\theta})=1$ and $\omega(E_{-\theta})=-E_\theta$, where $\omega:\g\to\g$ is the Chevalley involution. Let $I_+:=\sum_{i=1}^n E_i$ be a principal nilpotent element of $\g$. Define
\be \label{cyc-princ}
\Lambda=I_+ +\lambda \, E_{-\theta}.
\ee
Consider the following differential equation for a $\g$-valued function $M=M(\lambda)$ of an independent variable $\lambda$
\be
M' =  [M,\Lambda], \qquad   '=\frac{d}{d\lambda}. \label{int1}
\ee

\begin{definition} We call eq. \eqref{int1} the topological differential equation 
(in short: {\it topological ODE})
  {\it of $\g$-type}.
\end{definition}

Solutions of the differential equation \eqref{int1} are entire functions of the complex variable $\lambda$. At $\lambda=\infty$ they may have a singularity.  A solution $M(\lambda)$ is called {\it regular} 
if it grows at most polynomially at  $|\lambda|\to\infty$ within a certain chosen sector of the complex $\lambda$-plane. 
Denote by $\SS_\infty^{\reg}(\g)$ the vector space of regular solutions to \eqref{int1}. Our goal is to describe the space $\SS_\infty^\reg(\g)$ for an arbitrary simple Lie algebra $\g$. Actually, we will only deal with asymptotic expansions of regular solutions without entering into details about the region of their validity that can be described by means of standard techniques of asymptotic analysis of ODEs, see e.g. \cite{Wasow}.

As a way of introductory example we consider now the simplest instance  $\g=sl_2(\mathbb C)$. In this case $\Lambda=\left(\begin{array}{cc} 0 & 1 \\ \lambda & 0\end{array}\right)$. Write $M(\lambda)=\left(\begin{array}{cr} a(\lambda) & b(\lambda)\\ c(\lambda) & -a(\lambda)\end{array}\right)$. The equation \eqref{int1} reads
\be\label{top1}
\left(\begin{array}{c}a'\\ b' \\ c'\end{array}\right) =\left(\begin{array}{rrr}0 & \lambda & -1\\ 2 & 0 & 0\\ -2\lambda & 0 & 0\end{array}\right)\left(\begin{array}{c}a\\ b \\ c\end{array}\right).
\ee
The type of behaviour at infinity is essentially specified by the eigenvalues of the matrix of coefficients. In this case they are equal to 0 and $\pm 2\sqrt{\lambda}$. Solutions corresponding to the nonzero eigenvalues have an 
essential singularity $\sim \lambda^{-\frac12}e^{\pm \frac43 \lambda^{3/2}}$ at infinity. The regular solution is unique up to a constant factor. It corresponds to the eigenvector with eigenvalue zero. The solution has the form
\be\label{faza}
M(\lambda)=\frac{\lambda^{-\frac12}}{2}\left(
\begin{array}{cc}
\sum_{g=1}^\infty \frac{(6g-5)!!}{96^{g-1}\cdot (g-1)!} \lambda^{-3g+2} & 2 \sum_{g=0}^\infty \frac{(6g-1)!!}{96^g\cdot g!} \lambda^{-3g}\\
\\
-2 \sum_{g=0}^\infty\frac{6g+1}{6g-1} \frac{(6g-1)!!}{96^g\cdot g!} \lambda^{-3g+1} &  -\sum_{g=1}^\infty \frac{(6g-5)!!}{96^{g-1}\cdot (g-1)!} 
\lambda^{-3g+2}\\
\end{array}
\right).
\ee
This matrix-valued function appeared in our paper \cite{BDY1}. It was used, in a slightly modified normalization, as key tool within an efficient algorithm for computation of intersection numbers of tautological classes on the Deligne--Mumford moduli spaces, see eq. \eqref{KdV-N-point} below.

The topological ODE  for the Lie algebra $sl_2(\mathbb C)$ is closely related to the theory of Airy functions. Indeed, from \eqref{top1} it readily follows that
\be\label{airy4}
a=\frac12 b', \quad c=-\frac12 b''+\lambda\, b
\ee
while for $b=b(\lambda)$ one arrives at a third order ODE
\be\label{airy3}
-\frac12 b'''+2\lambda\, b'+b=0.
\ee
Solutions to this equation are products of solutions to the Airy equation
\be\label{airy2}
y''=\lambda\, y.
\ee
That is, $b(\lambda)$ is the diagonal value of the resolvent of the Airy equation.
The matrix $M(\lambda)$ can also be considered as resolvent of the matrix version of the Airy equation
\be \label{Airy-vector}
\left[\frac{d}{d\lambda}+\left( \begin{array}{cc} 0 & 1\\ \lambda & 0\end{array}\right)\right]\vec{y}=0, \quad 
\vec{y}=\left(\begin{array}{r}y \\ -y'\end{array}\right).
\ee

Let us proceed now with the general case. The eigenvectors of the matrix of coefficients of the system \eqref{int1} with zero eigenvalue correspond to matrices commuting with $\Lambda$. As observed in \cite{Kostant}, the kernel of the linear operator
$$
{\rm ad}_{\Lambda(\lambda_0)}: \g \to \g
$$
for any $\lambda_0\in \mathbb{C}$ has dimension $n={\rm rk}\,\g$. This suggests that the dimension of the space of regular solutions to the topological ODE \eqref{int1} is equal to $n$. We show in Thm. \ref{Qinf} that this is the case indeed. Moreover, we will construct a particular basis in the space $\SS_\infty^\reg(\g)$. To this end denote $L(\g)=\g \otimes \CC[\lambda, \lambda^{-1}]$ the loop algebra of $\g$. Introduce the {\it principal gradation} on the Lie algebra $L(\g)$ in the following way
\be\label{prince}
\deg E_i=1, ~\deg H_i=0, ~ \deg F_i=-1,~i=1, \dots, n, \quad \deg\lambda=h.
\ee
Here $h$ is the Coxeter number of $\g$. According to this gradation we have
$
\deg \Lambda=1.
$
Recall \cite{Kac1978} that $\Ker \, \ad_\Lambda\subset L(\g)$ has the following decomposition 
\be
\Ker \, \ad_\Lambda=\bigoplus_{j \in E} \mathbb{C} \Lambda_j,\qquad \deg\Lambda_j=j\in E:=\bigsqcup_{i=1}^n (m_i+h \mathbb{Z})
\ee
where the integers
\be\label{expo}
1=m_1<m_2\leq \dots \leq m_{n-1}<m_n=h-1
\ee
are the exponents of $\g$. The matrices $\Lambda_i$ commute pairwise
\be\label{commute}
[\Lambda_i,  \Lambda_j ]=0,\qquad \forall\,i,j\in E.
\ee
They can be normalized in such a way \cite{KW,Kac} that
\bea
\&\& \Lambda_{m_a+kh}=\Lambda_{m_a}\, \lambda^k,\qquad k\in \mathbb{Z},\label{norm-Lambda-1}\\
\&\& (\Lambda_{m_a}|\,\Lambda_{m_b})=h \,\lambda\,\delta_{a+b, n+1}. \label{norm-Lambda-2}
\eea
In particular, we choose $\Lambda_1=\Lambda.$

\begin{theorem} \label{Qinf}
1) $\dim_\CC\SS_\infty^\reg(\g)={\rm rk}\, \g.$

2) There exists a unique basis $M_1(\lambda)$, \dots, $M_n(\lambda)$ in the space $\SS_\infty^\reg(\g)$ of the form
\begin{eqnarray}\label{baza1}
&&
M_a(\lambda) = \lambda^{-\frac{m_a}{h}} \left[ \Lambda_{m_a} +\sum_{k=1}^\infty M_{a,k}(\lambda)\right],\nn\\
&&
\\
&&
M_{a,k}(\lambda)\in L(\g),\quad \deg M_{a,k}(\lambda)=m_a-(h+1)k, \quad a=1, \dots, n.
\nn
\eea
\end{theorem}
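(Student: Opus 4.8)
My plan is to read \eqref{int1} as a linear system of size $\dim\g$ with coefficient operator $\ad_\Lambda$, to bound the dimension from above by isolating the solutions carrying essential singularities, and then to construct exactly $n$ regular solutions by a graded recursion. \emph{Upper bound.} For generic fixed $\lambda$ the element $\Lambda(\lambda)\in\g$ is regular semisimple, so by the quoted result of Kostant $\ad_{\Lambda(\lambda)}$ has an $n$-dimensional kernel and $\dim\g-n$ nonzero eigenvalues. Since $\Lambda$ is homogeneous of degree $1$, the rescaling $\lambda\mapsto c^{h}\lambda$ combined with the degree automorphism sends $\Lambda\mapsto c\,\Lambda$, which forces every nonzero eigenvalue of $\ad_{\Lambda(\lambda)}$ to scale as $\lambda^{1/h}$. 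By the standard asymptotic analysis of linear ODEs the corresponding solutions carry an essential singularity $\exp\!\big(c\,\lambda^{(h+1)/h}(1+o(1))\big)$ and are therefore not regular (for $\g=sl_2$, $h=2$, this is the $e^{\pm\frac43\lambda^{3/2}}$ of the Introduction). Hence $\dim_\CC\SS_\infty^\reg(\g)\le n$.

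\emph{Construction.} For each exponent $m_a$ in \eqref{expo} I substitute the ansatz \eqref{baza1} into \eqref{int1} and equate homogeneous components in the principal gradation \eqref{prince}. Writing $M_a=\lambda^{-m_a/h}S_a$ with $S_a=\sum_{k\ge0}M_{a,k}$ and $M_{a,0}=\Lambda_{m_a}$, the equation becomes $\ad_\Lambda S_a=\frac{m_a}{h\lambda}S_a-S_a'$; matching the piece of principal degree $m_a-(h+1)k-h$ gives the recursion
\[
\ad_\Lambda M_{a,k+1}=\frac{m_a}{h\lambda}\,M_{a,k}-\frac{d}{d\lambda}M_{a,k},\qquad M_{a,0}=\Lambda_{m_a},
\]
and the degrees come out exactly as the prescribed $\deg M_{a,k}=m_a-(h+1)k$. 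The top order $\ad_\Lambda\Lambda_{m_a}=0$ holds by \eqref{commute}.

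\emph{Main obstacle: solvability.} The crux is that at each step the right-hand side must lie in $\Img\ad_\Lambda$. Because $\ad_\Lambda$ is skew with respect to the nondegenerate invariant form $(\cdot|\cdot)$, one has $\Img\ad_\Lambda=(\Ker\ad_\Lambda)^{\perp}$ degree by degree, so solvability is equivalent to the vanishing of the pairing of the right-hand side with every generator $\Lambda_j$ of the principal Heisenberg $\Ker\ad_\Lambda$. I expect this to be the hard part, and I would prove it by induction: fixing the residual $\Ker\ad_\Lambda$-ambiguity in each $M_{a,k+1}$ to zero keeps $M_{a,k}\in\Img\ad_\Lambda$ for $k\ge1$, whence $(M_{a,k}\,|\,\Lambda_j)=0$ and the obstruction collapses to a single derivative term. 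The seed is the order-$0$ identity: the pairing of $\frac{m_a}{h\lambda}\Lambda_{m_a}-\Lambda_{m_a}'$ with $\Lambda_{m_{n+1-a}}\lambda^{l}$ vanishes precisely because differentiating the normalization \eqref{norm-Lambda-2} together with homogeneity forces $\big(\frac{d}{d\lambda}\Lambda_{m_a}\,\big|\,\Lambda_{m_{n+1-a}}\big)=m_a$. Equivalently, one may package the entire solve as a Drinfeld--Sokolov dressing transformation conjugating $\frac{d}{d\lambda}+\ad_\Lambda$ to a connection valued in $\Ker\ad_\Lambda$, which renders the triangular recursion automatically solvable.

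\emph{Uniqueness and conclusion.} Fixing the kernel components to zero makes each formal series \eqref{baza1} unique, and the $n$ series are linearly independent since their leading terms are the linearly independent Heisenberg generators $\Lambda_{m_1},\dots,\Lambda_{m_n}$. Standard asymptotic ODE theory realizes these formal series as the asymptotic expansions of genuine regular solutions, so $\dim_\CC\SS_\infty^\reg(\g)\ge n$; combined with the upper bound this yields equality and exhibits $M_1,\dots,M_n$ as the asserted basis.
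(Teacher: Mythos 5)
Your upper bound via Kostant's theorem and the shape of your graded recursion agree with the paper's proof, but your handling of what you yourself identify as the main obstacle --- solvability of the recursion --- is not just incomplete, it is false, and it fails at exactly the point that carries the real content of the paper's argument. You propose to fix the $\Ker\,\ad_\Lambda$-ambiguity in each $M_{a,k+1}$ to zero and claim that then $(M_{a,k}\,|\,\Lambda_j)=0$ makes the next obstruction ``collapse to a single derivative term'' (implicitly, one that vanishes). It does not: differentiating $(M_{a,k}\,|\,\Lambda_j)\equiv 0$ in $\lambda$ gives $(M_{a,k}'\,|\,\Lambda_j)=-(M_{a,k}\,|\,\Lambda_j')$, and $\Lambda_j'$ is \emph{not} in $\Ker\,\ad_\Lambda$ (already $\Lambda'=E_{-\theta}$), so this pairing has no reason to vanish and generically does not. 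Equivalently, in terms of the gradation operator \eqref{grad}: writing $M_{a,k+1}=A+B$ with $A\in\Img\,\ad_\Lambda$, $B\in\Ker\,\ad_\Lambda$, the solvability condition for the next step reads
$$
(h+1)(k+1)\,(B\,|\,\Lambda_j)+\left(\left[\rho^\vee,A\right]\,\big|\,\Lambda_j\right)=0,\qquad j\in E,
$$
and $([\rho^\vee,A]\,|\,\Lambda_j)=h\,\lambda\,(A\,|\,\Lambda_j')$ is generically nonzero. The paper's proof runs exactly the opposite way: $B$ is \emph{not} a free normalization to be set to zero; it is uniquely determined --- and generically nonzero --- by solvability of the following step, namely $B=-\frac1{\lambda\,h\,(h+1)(k+1)}\sum_{b=1}^n\left(\left[\rho^\vee,A\right]\,|\,\Lambda_{m_{n+1-b}}\right)\Lambda_{m_b}$. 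A concrete counterexample to your prescription: for $\g=sl_2(\CC)$ (trace form, $H={\rm diag}(1,-1)$, $h=2$, $m_1=1$) the recursion \eqref{m-r} gives $M_{1,1}=-H/(4\lambda)$, and the $\Img\,\ad_\Lambda$-part of $M_{1,2}$ is $A=\frac3{16}\left(\begin{array}{cc}0 & \lambda^{-3}\\ -\lambda^{-2} & 0\end{array}\right)$; if the kernel part of $M_{1,2}$ is taken to be zero, then $\left(-A'+\frac1{2\lambda}A\,\big|\,\Lambda\right)=\frac3{16}\lambda^{-3}\neq 0$, so the equation for $M_{1,3}$ has no solution and the recursion stops. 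The actual regular solution carries the nonzero $\Ker\,\ad_\Lambda$-component $-\frac1{32}\lambda^{-3}\Lambda$ in $M_{1,2}$.

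Two consequences. First, your uniqueness argument (``fixing the kernel components to zero makes each formal series unique'') collapses together with the construction: uniqueness in Theorem \ref{Qinf} holds because the kernel components are \emph{forced} by solvability combined with the degree constraint, not because they can be normalized away. Second, your one-sentence alternative --- conjugating $d/d\lambda+\ad_\Lambda$ by a Drinfeld--Sokolov-type dressing into a connection valued in $\Ker\,\ad_\Lambda$ --- could indeed be developed into a correct proof, but as stated it is only an assertion, and it in fact contradicts your main argument: the dressed solution automatically acquires exactly those nonzero $\Ker\,\ad_\Lambda$-components that you set to zero. (Your seed identity $(\Lambda_{m_a}'\,|\,\Lambda_{m_{n+1-a}})=m_a$ is correct, and your upper bound $\dim_\CC\SS_\infty^\reg(\g)\leq n$ is essentially the paper's; it is the induction step of the existence proof that breaks.)
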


\noindent The proof is given in Sect.\,\ref{proof}. 

\begin{definition} The solutions \eqref{baza1} to the topological ODE of $\g$-type are called {\it generalized Airy resolvents} of $\g$-type.
\end{definition}

The topological ODE \eqref{int1} can be recast into the form
\be\label{int11}
\left[ \frac{d}{d\lambda}+\Lambda, M\right]=0.
\ee
Using such a representation along with \eqref{commute}, \eqref{norm-Lambda-2} as well as the invariance property of the Cartan--Killing form one arrives at

\begin{proposition} \label{comm-p}
The generalized Airy resolvents commute pairwise
\be\label{commute1}
\left[ M_a(\lambda), M_b(\lambda)\right]=0, \quad a, \, b=1, \dots, n.
\ee
Moreover,  $(M_a(\lambda)\,|\,M_b(\lambda))=h\,\delta_{a+b, n+1}.$
\end{proposition}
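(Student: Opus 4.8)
The plan is to handle the two claims separately: commutativity follows because the bracket of two solutions is again a regular solution whose leading term is killed by \eqref{commute}, while the value of the pairing follows from invariance of the form (to show it is constant) together with the normalization \eqref{norm-Lambda-2} (to evaluate the constant).

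For \eqref{commute1} I would first show that $N(\lambda):=[M_a(\lambda),M_b(\lambda)]$ is itself a solution of the topological ODE. Writing the equation in the operator form \eqref{int11} and setting $\mathcal{L}:=\frac{d}{d\lambda}+\Lambda$, membership $M\in\SS_\infty^\reg(\g)$ reads $[\mathcal{L},M]=0$, and the Jacobi identity gives
\[
[\mathcal{L},[M_a,M_b]]=[[\mathcal{L},M_a],M_b]+[M_a,[\mathcal{L},M_b]]=0,
\]
so $N$ solves \eqref{int1}. As a bracket of two entire functions of at most polynomial growth in the chosen sector, $N$ is again entire and polynomially bounded, hence $N\in\SS_\infty^\reg(\g)$; by Theorem \ref{Qinf} we may write $N=\sum_c\gamma_c M_c$.

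The next step is to force all $\gamma_c=0$ by comparing leading behaviours. Assign to a homogeneous element of $L(\g)$ of principal degree $d$ the weight $d/h$, so $\lambda$ has weight $1$ and the prefactor $\lambda^{-m_c/h}$ has weight $-m_c/h$. By \eqref{baza1} each $M_c$ then has top weight $0$, with weight-zero part exactly $\lambda^{-m_c/h}\Lambda_{m_c}$, while every correction $M_{c,k}$ contributes weight $-(h+1)k/h<0$. Since the $\Lambda_{m_c}$ are linearly independent and distinct exponents are separated by distinct fractional powers of $\lambda$, the weight-zero part of $\sum_c\gamma_c M_c$ already determines the $\gamma_c$. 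On the other hand, expanding $N=\lambda^{-(m_a+m_b)/h}[\Lambda_{m_a}+\cdots,\ \Lambda_{m_b}+\cdots]$ and using $[\Lambda_{m_a},\Lambda_{m_b}]=0$ from \eqref{commute}, each surviving term of the inner bracket has principal degree at most $m_a+m_b-(h+1)$, so $N$ has strictly negative top weight and vanishing weight-zero part. Hence all $\gamma_c=0$ and $N=0$.

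For the pairing I would first show $(M_a\,|\,M_b)$ is independent of $\lambda$. Differentiating, using $M_a'=[M_a,\Lambda]$, $M_b'=[M_b,\Lambda]$ and the $\ad$-invariance $([\Lambda,X]\,|\,Y)+(X\,|\,[\Lambda,Y])=0$,
\[
\frac{d}{d\lambda}(M_a\,|\,M_b)=([M_a,\Lambda]\,|\,M_b)+(M_a\,|\,[M_b,\Lambda])=-([\Lambda,M_a]\,|\,M_b)-(M_a\,|\,[\Lambda,M_b])=0,
\]
so the pairing is a constant. To evaluate it, note the form pairs degree $d$ with degree $-d$, so a pairing of homogeneous loop elements of total degree $D$ is a multiple of $\lambda^{D/h}$; after the prefactor $\lambda^{-(m_a+m_b)/h}$ only the leading contribution $\lambda^{-(m_a+m_b)/h}(\Lambda_{m_a}\,|\,\Lambda_{m_b})$ can have weight $0$, all corrections being of strictly negative weight. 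By \eqref{norm-Lambda-2} this equals $\lambda^{-(m_a+m_b)/h}\,h\,\lambda\,\delta_{a+b,n+1}$; homogeneity of \eqref{norm-Lambda-2} forces $m_a+m_b=h$ in the nonvanishing case, so the powers of $\lambda$ cancel and the constant is $h\,\delta_{a+b,n+1}$. The algebraic identities (Jacobi and invariance) are immediate; the main obstacle is the graded/asymptotic bookkeeping — establishing that a regular solution is pinned down by its weight-zero part, so that a vanishing leading term forces it to vanish, and that no subleading term can contribute to the constant value of the pairing. Both rely on the precise degree structure \eqref{baza1} provided by Theorem \ref{Qinf}.
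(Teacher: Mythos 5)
Your proof is correct and follows exactly the route the paper indicates for this proposition: the operator form \eqref{int11} plus the Jacobi identity shows $[M_a,M_b]$ is again a regular solution, \eqref{commute} kills its leading term so that Theorem \ref{Qinf} forces it to vanish, and invariance of the Cartan--Killing form together with \eqref{norm-Lambda-2} gives constancy and then the value of the pairing. The paper leaves all of this as a one-line sketch, and your graded bookkeeping (weight-zero parts pinning down the coefficients $\gamma_c$ and the constant value of the pairing) fills in the details correctly, including the $D_{2k}$ case of a repeated exponent.
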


\begin{example} 
\label{An-recursion} 
Consider the $A_n$ case, $\g=sl_{n+1}(\mathbb C)$. The $(n+1)\times (n+1)$ matrix $\Lambda$ reads
$$
\Lambda=\left(\begin{array}{ccccc} 
0 & 1 & 0 & \cdots & 0\\
0 & 0 & 1 & \ddots & 0\\
\vdots & \ddots & \ddots & \ddots & \vdots\\
0 & 0 & \ddots & \ddots & 1\\
\lambda & 0 & 0 & \cdots & 0
\end{array}\right).
$$
The exponents are $m_a=a$, $a=1, \dots, n$, the Coxeter number is $h=n+1$. One can choose
$$
\Lambda_i=\Lambda^i, \quad i=1, \dots, n.
$$
Then
\be \label{first-order}
M_a(\lambda) =\lambda^{-\frac{a}{n+1}}\left[ \Lambda^a+\frac1{(n+1)\, \lambda} \sum_{\nu=1}^n  \frac{\zeta^{\nu\,a}-1}{4\sin^2 \frac{\pi \nu}{n+1}}\Omega_\nu\Lambda^{a-1}+\dots\right]
\ee
where $\zeta=e^{\frac{2\pi i}{n+1}}$ is a root of unity,
$
\Omega_\nu={\rm diag}\left(1, \zeta^\nu, \zeta^{2\nu}, \dots , \zeta^{n\, \nu}\right).
$
The recursion procedure for computing subsequent terms will be explained below, see Example \ref{primeran}. 
In the particular case $n=1$ one obtains the solution \eqref{faza}.
\end{example}

The recursive procedure for computing the series expansions of the generalized Airy resolvents outlined in the Example \ref{An-recursion} above can be represented in an alternative form. The basic idea can be already seen from the $sl_2(\mathbb C)$ example: the topological ODE for a matrix-valued function reduces to a scalar differential equation \eqref{airy3} for one of the entries of the matrix (called $b(\lambda)$ above); other matrix entries are expressed in terms of $b(\lambda)$ and its derivatives, see eq. \eqref{airy4}. 

Let us explain a general method  for reducing the topological ODE for an arbitrary simple Lie algebra of rank $n$ to a system of ODEs with $n$ dependent variables. To this end we will use the Jordan decomposition of the operator $\ad_{I_+}$ described hereafter.

Let $\rho^\vee\in \h$ be the Weyl co-vector of $\g$, whose defining equations are
\be
\alpha_i(\rho^\vee)=1,\qquad i=1,\dots,n.
\ee
Write $\rho^\vee=\sum_{i=1}^n x_i \,H_i,\,x_i\in\CC$ and define 
$
I_-=2\sum_{i=1}^n x_i \,F_i.
$
Then $I_+,I_-,\rho^\vee$ form an $sl_2(\mathbb{C})$ Lie algebra:
\be
\label{SL2princ}
[\rho^\vee, I_+]= I_+,\quad [\rho^\vee,I_-]=-I_-,\quad [I_+,I_-]=2\rho^\vee.
\ee
According to \cite{Kostant,BFRFW},  there exist elements $\gamma^1,\dots,\gamma^n\in \g$ such that
\be
{\rm Ker} \, {\rm ad}_{I_-}={\rm Span}_{\mathbb{C}} \{\gamma^1,\dots,\gamma^n\},\qquad  [\rho^\vee,\gamma^i]=-m_i \, \gamma^i.
\ee
Fix $\{\gamma^1,\dots,\gamma^n\}$, 
then the lowest weight decomposition of $\g$ has the form
\bea
\g=\bigoplus_{i=1}^n \L^i,\qquad \L^i={\rm Span}_{\mathbb{C}} \{\gamma^i,\ad_{I_+} \gamma^i, \dots, \ad_{I_+}^{2m_i} \gamma^i\}.
\eea
Here each $\L^i$ is an $sl_2(\mathbb{C})$-module. Any $\g$-valued function
$M(\lambda)$ can be uniquely represented in the form 
\be\label{;;}
M(\lambda)=\sum_{i=1}^n S_{i}(\lambda)  \, \ad_{I_+}^{2m_i} \gamma^i+\sum_{i=1}^n \sum_{m=0}^{2m_{i}-1} K_{im}(\lambda) \, \ad_{I_+}^m \gamma^i
\ee
where $S_{i}(\lambda)$, $K_{im}(\lambda)$ are certain complex-valued functions.
\bt \label{ce-thm}
Let  $M$ be any solution of \eqref{int1}. The functions $K_{im}$ have the following expressions
\bea
\&\& \!\!\! K_{im} = (-1)^m S_i^{(2m_i-m)}+\sum_{u=1}^n \sum_{ v=0}^{ 2m_i-1-m}  k_{imuv}^0  \, S_u^{(v)}+ \lambda\, \sum_{u=1}^n \sum_{ v =0}^{ m_i-1-m}  k_{imuv}^1 \, S_u^{(v)},~ m=0,\dots,m_i-1;\nn\\
\&\& \!\!\! K_{im} =(-1)^m \, S_i^{(2m_i-m)},~ m=m_i,\dots, 2m_i-1\nn\\
\label{Kim-S}
\ee
with constant coefficients $k_{imuv}^0,\,k_{imuv}^1$ independent of the choice of the solution $M.$ Moreover, the topological ODE \eqref{int1} is equivalent to a system of linear ODEs for
$S_1,\dots,S_n$ of the form
\be
\label{S-red}
S_i^{(2m_i+1)}=\sum_{u=1}^n \sum_{ v=0}^{ 2m_i}  k_{iuv}^0  \, S_u^{(v)}+ \lambda\, \sum_{u=1}^n \sum_{ v =1 }^{ m_i}  k_{iuv}^1 \, S_u^{(v)}, \quad i=1, \dots, n.
\ee
Here $k_{iuv}^0,\,k_{iuv}^1$ are constants independent of the choice of the solution $M.$
\et
\noindent The proof is given in Sect.\,\ref{c-e-s}. Note that the coefficients
$k^0_{iuv}$, $k^1_{iuv}$ coincide with $k^0_{imuv}$, $k^1_{imuv}$ at $m=-1$. See details in the proof.

We call eqs. \eqref{S-red} the {\it reduced topological ODEs} of $\g$-type.
Denote by
$
S_{a;1}(\lambda), \dots, S_{a;n}(\lambda)
$ 
the $S$-coefficients in the decomposition \eqref{;;} of the generalized Airy resolvent $M_a(\lambda)$, $a=1, \dots, n$. 
\begin{definition} 
The series $S_{a; i}(\lambda)$ are called {\it essential series} of $\g$-type.
\end{definition}

\begin{example} [The $A_1$ case] 
The essential series $S_{1;1}$ coincides with the $(1,2)$-entry of the Airy resolvent \eqref{faza} (the function $b(\lambda)$ in \eqref{top1}). 
The expression \eqref{Kim-S} is given explicitly by \eqref{airy4}.

\end{example}
\begin{example} [The $A_2$ case]  \label{A2-intro}
The decomposition \eqref{;;} reads
\be
M=\left(\begin{array}{ccc} 2 K_{11} +K_{22} & - 2 S_1 -3 K_{23} & 6 S_2\\
2 K_{10}+K_{21} & -2 K_{22} & 3 K_{23}-2 S_1\\
K_{20} & 2 K_{10} -K_{21} & -2 K_{11} +K_{22}\end{array}\right).
\ee
The expression \eqref{Kim-S} is given explicitly by
\bea
\&\&K_{10}= S_1''+ 3\lambda \, S_2, \quad K_{11}= -S_1',\\
\&\&K_{20}= S_2^{(4)}-2\lambda\,   S_1,~~ K_{21}= -S_2''', ~~
K_{22}= S_2'',~~ K_{23}= -S_2'.
\eea
The reduced equations \eqref{S-red} read
\bea
\&\&  6 S_2+2 S_1^{(3)}+9 \, \lambda \, S_2'=0,\\
\&\& 2 S_1+2 S_1^{(4)}+15 S_2'
-S_2^{(5)}+\lambda \,  \left(9 S_2''+6 S_1'\right)=0.
\eea
Explicit expressions of the essential series $S_{a;j}$ of the $A_2$-type can be found in Sect.\,\ref{rem}.
\end{example}

\begin{proposition}  \label{S-prop} 
1. For any $a\in\{1,2,\dots n\}$, the essential series $S_{a;i}$ satisfy 
\bea
\&\& 1) ~~ S_{a;1}(\lambda), \dots, S_{a;n}(\lambda)\in \lambda^{-\frac{m_a}{h}}  \cdot \mathbb C[[\lambda^{-1}]];\nn \\
\&\& 2) ~~ \mbox{the vectors } (S_{a;i})_{i=1\dots n},\,a=1,\dots,n \mbox{ are linearly independent};\nn\\
\&\& 3) ~~ S_{a;a} \mbox{ is non-zero}.\nn
\eea

2. Let $\g\neq D_{n=2k}$ and $M$ be a regular solution of the topological ODE \eqref{int1} such that the $S_a$-coefficient of $M$ coincides with $S_{a;a}$; then $M=M_a.$
If $\g$ is of type $D_{n}$ with $n$ even, 
the statement remains valid for $a\not\in\{n/2, n/2+1\}$, and it should be modified for 
$a=n/2$ and $a=n/2+1$ in the following way: $M=M_a$ if the
 $S_{n/2},\,S_{n/2+1}$-coefficients of $M$ coincide with $S_{a;n/2}$, $S_{a;n/2+1}$, respectively.
\end{proposition}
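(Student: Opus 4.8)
I read Parts 1.1--1.3 as structural consequences of Theorems~\ref{Qinf} and \ref{ce-thm} together with the principal gradation \eqref{prince}, whereas Part 2 is the substantive assertion. The plan is to reduce Part 2, via the basis expansion of Theorem~\ref{Qinf}, to the linear independence (for each fixed $a$) of the $n$ functions $S_{1;a},\dots,S_{n;a}$ obtained by reading off one essential slot across the whole family of resolvents. For Part 1.1 I would note that in \eqref{baza1} the prefactor $\lambda^{-m_a/h}$ is common to all entries while the bracket $\Lambda_{m_a}+\sum_k M_{a,k}$ lies in $L(\g)$ and hence carries only integer powers of $\lambda$; projecting onto the top vectors $\ad_{I_+}^{2m_i}\gamma^i$ then gives $S_{a;i}\in\lambda^{-m_a/h}\CC[[\lambda^{-1}]]$. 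Part 1.2 is immediate from Theorem~\ref{ce-thm}: every $K_{im}$ is expressed through the $S_u$ and their derivatives, so the linear map $M\mapsto(S_i)_i$ is injective and the independence of $M_1,\dots,M_n$ transfers to the vectors $(S_{a;i})_i$. For Part 1.3 the leading coefficient of $S_{a;a}$ is the component of $\ad_{I_+}^{2m_a}\gamma^a$ in $P_{m_a}:=\Lambda_{m_a}|_{\lambda=0}$; writing $\Lambda_{m_a}=P_{m_a}+\lambda\,Q_{m_a-h}$ (only these two terms survive by degree) and extracting the $\lambda^0$--part of $[\Lambda,\Lambda_{m_a}]=0$ yields $[I_+,P_{m_a}]=0$, so $P_{m_a}\in\Ker\ad_{I_+}={\rm Span}\{\ad_{I_+}^{2m_i}\gamma^i\}$, whence by degree $P_{m_a}=\sum_{i:\,m_i=m_a}c_{ai}\,\ad_{I_+}^{2m_i}\gamma^i$. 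Moreover $P_{m_a}\neq0$, since $P_{m_a}=0$ would force at next order $Q_{m_a-h}\in\Ker\ad_{I_+}$, impossible in negative degree, giving $\Lambda_{m_a}=0$. For distinct exponents this already gives $c_{aa}\neq0$; in the $D_{2k}$ case I would fix $\gamma^{n/2},\gamma^{n/2+1}$ compatibly with the two relevant $\Lambda_{m_a}$ so that the diagonal coefficient survives.

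\textbf{Reduction of Part 2.} Let $M$ be regular with $S_a$-coefficient equal to $S_{a;a}$. By Theorem~\ref{Qinf} I expand $M=\sum_{b=1}^n\xi_b M_b$, so that the $S_a$-coefficient of $M$ is $\sum_{b}\xi_b\,S_{b;a}$ and the hypothesis reads $\sum_{b}\xi_b\,S_{b;a}=S_{a;a}$. By Part 1.1 each $S_{b;a}$ lies in the sector $\lambda^{-m_b/h}\CC[[\lambda^{-1}]]$, and for $\g\neq D_{2k}$ the exponents are pairwise distinct, so the powers $-m_b/h$ are pairwise incongruent modulo $\ZZ$. Projecting the identity onto these disjoint exponent classes decouples it: the class of $-m_a/h$ gives $\xi_a S_{a;a}=S_{a;a}$, hence $\xi_a=1$ by Part 1.3, while for each $b\neq a$ the class of $-m_b/h$ gives $\xi_b\,S_{b;a}=0$. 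Thus the desired conclusion $M=M_a$ is \emph{equivalent} to the non-vanishing of the off-diagonal essential series, $S_{b;a}\not\equiv0$ for all $b\neq a$.

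\textbf{The main obstacle.} This non-vanishing is the hard point. By Part 1.3 the \emph{leading} coefficient of $S_{b;a}$ is $c_{ba}$, which vanishes for $m_a\neq m_b$; so for $b\neq a$ one must descend to the first subleading term generated by the recursion of Theorem~\ref{Qinf} (equivalently by the reduced system \eqref{S-red}) and show it is non-zero. The explicit $A_n$ expansion \eqref{first-order}, whose order-$\lambda^{-1}$ coefficient $\frac{\zeta^{\nu a}-1}{4\sin^2(\pi\nu/(n+1))}$ is manifestly non-zero for $\nu\neq0$, points to the uniform mechanism: the first correction couples each slot to the principal nilpotent through $\ad_{I_+}$ and $\ad_{E_{-\theta}}$. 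A cleaner route rephrases the essential slot as a pairing: for distinct exponents the principal $sl_2$-modules $\L^i$ are pairwise orthogonal under the Cartan--Killing form and each is self-dual, so the functional extracting the top vector of $\L^a$ is $M\mapsto(M\,|\,\gamma^a)/(\ad_{I_+}^{2m_a}\gamma^a\,|\,\gamma^a)$, giving $S_{b;a}\propto(M_b\,|\,\gamma^a)$. Differentiating with the help of \eqref{int1} and the ad-invariance of the form yields $\frac{d}{d\lambda}(M_b|\gamma^a)=(M_b|\ad_{I_+}\gamma^a)+\lambda\,(M_b|\ad_{E_{-\theta}}\gamma^a)$, exhibiting the coupling explicitly. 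I expect the required non-vanishing of all these pairings to follow from the dressing description of the resolvents together with the orthogonality $(M_a|M_b)=h\,\delta_{a+b,\,n+1}$ of Proposition~\ref{comm-p}, and I regard establishing it in type-independent fashion as the crux of the whole proof.

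\textbf{The $D_{2k}$ modification.} The same residue-class bookkeeping explains the exceptional case. At the repeated exponent $m_{n/2}=m_{n/2+1}$ the two series $S_{\cdot;n/2}$ and $S_{\cdot;n/2+1}$ occupy a single class, so its projection is a $2\times2$ system and matching one slot leaves a two-dimensional ambiguity. Matching both $S_{n/2}$ and $S_{n/2+1}$ --- equivalently, inverting the non-degenerate $2\times2$ Gram pairing of $\gamma^{n/2},\gamma^{n/2+1}$ against the two top vectors, which now span the degree-$(n-1)$ kernel of $\ad_{I_+}$ --- removes it and again forces $M=M_a$.
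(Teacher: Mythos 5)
Your Part 1 is correct and coincides in substance with the paper's own argument: statements 1.1 and 1.3 are the content of its Lemma \ref{Sai} (for 1.3 the paper cites Kostant's decomposition $\Lambda_{m_a}=\lambda\,\beta+v\,\ad_{I_+}^{2m_a}\gamma^a$ with $v\neq 0$, and your extraction of the $\lambda^0$-part of $[\Lambda,\Lambda_{m_a}]=0$ together with the negative-degree argument is a sound self-contained replacement), while 1.2 is deduced from \eqref{Kim-S} exactly as you do. Your reduction of Part 2 is also correct, and it is actually \emph{more} explicit than the paper's proof: writing $M=\sum_b\xi_b M_b$ via Theorem \ref{Qinf} and separating congruence classes of exponents modulo $h$ yields $\xi_a=1$ (from $S_{a;a}\neq 0$) and $\xi_b\,S_{b;a}=0$ for $b\neq a$; the paper records only the first step (``$M$ must be equal to $M_a$ since $S_{a;a}$ is non-zero'') and is silent on the off-diagonal constraints you isolate.

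The genuine gap is that you leave the non-vanishing $S_{b;a}\not\equiv 0$ ($b\neq a$) as an expectation to be supplied by dressing and by Proposition \ref{comm-p} --- and this gap cannot be closed, because the claim is false; the paper's own Section \ref{examples} shows it. For $A_3$ the normal form decouples the middle component (consistently with the dominant number $r_2=1$ in Table \ref{nums}), so $(\phi_{2;1},\phi_{2;3})$ solves the $(1,3)$-block, whose nonzero solutions have components with exponents in $\left(-\tfrac34+\ZZ_{\geq 0}\right)\cup\left(-\tfrac14+\ZZ_{\geq 0}\right)$ at $x=0$; but by the congruences \eqref{mod} these two series would have to lie in $x^{1/2}\,\CC[[x]]$, hence they vanish, i.e. $S_{2;1}=S_{2;3}=0$ (equivalently, an outer automorphism of $sl_4$ fixing $\Lambda,\Lambda^3$ and negating $\Lambda^2$ forces this). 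Consequently $M_1+c\,M_2$ is regular, has $S_1$-coefficient equal to $S_{1;1}$ for every $c$, and differs from $M_1$: Part 2 fails for $A_3$ with $a=1$, and the same decoupling occurs for $B_3$ and $D_4$ (for $D_4$ even when $a\notin\{n/2,n/2+1\}$). So the correct endpoint of your argument is precisely the equivalence you proved --- for a given $a$, Part 2 holds if and only if $S_{b;a}\neq 0$ for all $b$ --- which is sharper than the paper's one-line proof (that proof has the identical hole) and shows that the proposition needs a stronger hypothesis, e.g. matching all the $S$-coefficients of $M$ that are allowed by \eqref{mod} to be nonzero, rather than a cleverer proof of the non-vanishing.
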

\noindent The proof will be given in Sect.\,\ref{c-e-s}. We call $S_{a;a},\,a=1,\dots,n$ the {\it fundamental series} of $\g$-type.

\subsection{Dual topological ODE and its normal form}

Solutions to \eqref{int1} can be expressed by a suitable {\it integral transform} via another class of functions satisfying somewhat simpler differential equations. These functions will be constructed as solutions to another system of linear ODEs for a $\g$-valued function $G=G(x)$, $x\in\mathbb C$. In the notations of the previous section it reads
\be \label{dual-LM}
[G',E_{-\theta}]+[G,I_+]+x\,G=0, \qquad {}'=\frac{d}{dx}.
\ee

\begin{definition}
We call equation \eqref{dual-LM} the {\it dual topological differential equation} (dual topological ODE for short). 
The space of solutions of \eqref{dual-LM} is denoted by $\mathcal{G}(\g).$
\end{definition}

Solutions to the dual topological ODE are related with the generalized Airy resolvents by the following Laplace--Borel transform 
\be\label{lb-trans}
M(\lambda)=\int_C \, G (x) \, e^{-\lambda\,x} dx,
\ee
where $C$ is a suitable contour on the complex $x$-plane, which can depend on the choice of a solution $M(\lambda)\in\SS_\infty^\reg(\g)$.

\begin{example} 
For $\g=sl_2(\mathbb C)$ the dual topological equation for the matrix-valued function $G=\left(\begin{array}{cr} \alpha(x) & \beta(x)\\ \gamma(x) & -\alpha(x)\end{array}\right)$ reduces to
\be \label{A1-normal-form}
\beta'=\left( \frac{x^2}4 -\frac1{2x}\right) \beta.
\ee
This yields the following integral representation\footnote{
{Here and below the Laplace-type integrals are understood in a formal sense by applying a term-by-term integration
$$
\frac1{\sqrt{\pi}}\int\limits_0^\infty e^{\frac{x^3}{12}-\lambda\, x} \frac{dx}{\sqrt{x}}=\frac1{\sqrt{\pi}}\sum_{k=0}^\infty \int\limits_0^\infty \frac1{k!}\left(\frac{x^3}{12}\right)^k e^{-\lambda\, x} \frac{dx}{\sqrt{x}}=\frac1{\sqrt{\pi}}\sum_{k=0}^\infty \frac{\Gamma\left(3k+\frac12\right)}{k! \, 12^k} \lambda^{-3k-\frac12}.
$$
In order to arrive at an integral representation of a solution with the same asymptotic expansion within a certain sector near $\lambda=\infty$ one can integrate over a ray of the form $x=r\,e^{\frac{(2k+1)\pi\,i}3}$, $0<r<\infty$ for $k=0, \, 1,$ or $2$.
}} for solutions to the differential equation \eqref{airy3}
\be\label{contour}
b(\lambda)=\frac1{\sqrt{\pi}}\int\limits_0^\infty e^{\frac{x^3}{12}-\lambda\, x} \frac{dx}{\sqrt{x}}.
\ee

\end{example}

For an arbitrary simple Lie algebra we will now describe the reduction procedure for the 
dual topological ODE similar to the one used in the previous subsection. Write
\be\label{dual-M}
G(x)=\sum_{i=1}^n \phi_{i}(x)  \, \ad_{I_+}^{2m_i} \gamma^i+\sum_{i=1}^n \sum_{m=0}^{2m_{i}-1} \wt K_{im}(x) \, \ad_{I_+}^m \gamma^i
\ee
(cf. \eqref{;;} above).
\bt \label{ce-p} 
i). Let $G$ be any solution of \eqref{dual-LM}. Then
$\wt K_{im}$ have the following expressions
\be \label{Kim-dual}
\&\& \wt K_{im}= \sum_{u=1}^n \sum_{ v=0}^{2m_i-m} 
 (-x)^v\,  k_{imuv}^0  \, \phi_u+ \sum_{u=1}^n \sum_{ v=0}^{m_i-1-m}  (-1)^v  \, k_{imuv}^1 \, \le(x^v\, \phi_u'+v\,x^{v-1} \phi_u \ri),~m=0,\dots,m_i-1,\nn\\
 \&\& \wt K_{im} =x^{2m_i-m} \, \phi_i,~m=m_i,\dots,2m_i
\ee
where $k_{imuv}^0,\,k_{imuv}^1$ are the same constants as in Thm.\,\ref{ce-thm}.
Moreover, the dual topological ODE \eqref{dual-LM} is equivalent to  a system of {\bf first order} ODEs 
\be \label{first-dual}
x^{2m_i+1}\, \phi_i+\sum_{u=1}^n \sum_{ v=0}^{ 2m_i}  (-1)^v \, k_{iuv}^0  \, x^v\, \phi_u+  \sum_{u=1}^n \sum_{ v =1 }^{ m_i} (-1)^v\,  k_{iuv}^1 \, (v\,x^{v-1}\,\phi_u+x^v\,\phi_u')=0.
\ee
Here, $i=1, \dots, n,$ and $k^0_{iuv},\,k^1_{iuv}$ are constants (the same as in Thm.\,\ref{ce-thm}).

ii). The point $x=0$ is a Fuchsian singularity of the ODE system
\eqref{first-dual}. More precisely, equations \eqref{first-dual} have the following {\bf normal form}
\bea
&& \phi' = \frac{V_{-1}}{x}\, \phi + \sum_{k=0}^{2h-2}\, x^k \, V_k \, \phi,\qquad \phi=(\phi_1,\dots,\phi_n)^T, \label{normal-form}\\
&& V_{-1}={\rm diag}(-m_{n+1-a}/h)_{a=1,\dots,n},\quad V_k\in {\rm Mat}(n,\CC),\,k\geq 0.
\eea

iii). $\dim_\CC \GG(\g)={\rm rk}\,\g.$
\et

\begin{remark}
Choosing  the contour $C$ in several ways in \eqref{lb-trans} one obtains a complete basis of solutions of \eqref{int1}.
\end{remark}

\begin{example} \label{A1-normal}
For $\g=sl_2(\CC)$
one reads from \eqref{A1-normal-form} that 
$V_{-1}=-\frac12,\,V_0=V_1=0,\,V_2=\frac14.$
\end{example}
\begin{example} \label{A2-normal}
For $\g=sl_3(\CC)$ the normal form \eqref{normal-form} of the dual topological equation reads
\be
\phi'=\left[ -\frac1{3 x}\left( \begin{array}{cc} 2 & 0\cr 0 & 1\end{array}\right)-\frac{x^2}{9}\left(\begin{array}{cc}0 & 0\cr 2 & 0\end{array}\right) +\frac{x^4}6 \left(\begin{array}{cc} 0 & 1\cr 0 & 0\end{array}\right)\right]\phi.
\ee
\end{example}

Let $G_a$ denote the Laplace--Borel transform of $M_a$, and
$\phi_{a;1}, \dots, \phi_{a;n}$ 
the $\phi$-coefficients of $G_a$, $a=1, \dots, n$. 
Then the matrix $\Phi$ defined by $\Phi_{ia}:=\phi_{a;i}$ is a fundamental solution 
matrix of the normal form \eqref{normal-form}.
\begin{definition}
We call $\phi_{a;i}$ the {\it dual essential series}, and $\phi_{a;a}$ the {\it dual fundamental series}.
\end{definition}
\begin{conjecture}\label{conjIII} 
For any $a\in\{1,\dots,n\},$
the dual fundamental series $\phi_{a;a}$ satisfies a scalar linear ODE of order less than or equal to $n$ with polynomial coefficients in $x.$  
Such an ODE with the smallest possible order, say $r_a$, is called a {\it dominant equation} and $r_a$ is called a {\it dominant number}. The point $x=0$ is a regular singularity of the dominant equation. If $h$ is a prime number, then $r_1=\dots=r_n=h-1.$
\end{conjecture}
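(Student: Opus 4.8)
The plan is to separate the statement into its elementary structural parts, which follow from the normal form already in hand, and the arithmetic part governed by the primality of $h$, which is the real difficulty.

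For the bound $r_a\le n$ and the polynomiality of the coefficients I would argue directly from the first-order rank-$n$ system \eqref{normal-form}. Writing $A(x)=V_{-1}/x+\sum_{k\ge 0}x^k V_k$, repeated use of $\phi'=A(x)\phi$ expresses every derivative of a solution component through $\phi$ itself, so the $n+1$ functions $\phi_{a;a},\phi_{a;a}',\dots,\phi_{a;a}^{(n)}$ are $\mathbb{C}(x)$-linearly dependent. Hence $\phi_{a;a}$ is annihilated by a monic scalar operator of order $\le n$ over $\mathbb{C}(x)$; clearing denominators gives polynomial coefficients, and by definition $r_a$ is the order of the minimal such operator, so $r_a\le n$. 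For the regular singularity at $x=0$ I would invoke Theorem~\ref{ce-p}(ii): since $x=0$ is Fuchsian for \eqref{normal-form}, every solution component grows at most moderately as $x\to 0$. A scalar linear ODE all of whose solutions have moderate growth at a point is necessarily regular-singular there (the growth characterization of regular singular points). The solution space of the dominant equation is spanned by such components, so $x=0$ is a regular singularity, with exponents drawn from the entries $-m_{n+1-b}/h$ of $V_{-1}$.

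The substance of the conjecture is the last sentence. First I would observe that among all simple Lie algebras only type $A$ can have prime $h$: the Coxeter numbers are $n+1,\,2n,\,2n,\,2(n-1),\,6,12,12,18,30$, and a prime arises only as $h=n+1$, i.e. $\g=sl_h(\mathbb C)$ with $h$ prime. Then $n=h-1$, so $r_a=h-1$ asserts that the dominant order is \emph{maximal}, equal to the rank, for every $a$; equivalently one must prove the lower bound $r_a\ge n$, namely that $\phi_{a;a},\phi_{a;a}',\dots,\phi_{a;a}^{(n-1)}$ are $\mathbb{C}(x)$-linearly independent. My approach would be to show that the normal-form connection $(\mathbb{C}(x)^n,\,d-A)$ is \emph{irreducible} as a differential module over $\mathbb{C}(x)$ precisely when $h$ is prime, and to deduce from irreducibility that a nonzero solution component cannot satisfy a scalar equation of order below the rank. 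Primality enters through the exponent data at $x=0$: for $A_{h-1}$ the diagonal exponents are $-m_{n+1-a}/h=-(n+1-a)/h$, and since $\gcd(n+1-a,h)=1$ for all $a$ when $h$ is prime, each has exact denominator $h$, so the local monodromies are the full set of $h-1=n$ primitive $h$-th roots of unity — a single orbit under both $\mathrm{Gal}(\mathbb{Q}(\zeta)/\mathbb{Q})$ and the order-$h$ scaling symmetry $x\mapsto\zeta x$, $\zeta=e^{2\pi i/h}$, that preserves \eqref{normal-form} up to conjugation by the principal grading torus. A proper nonzero differential submodule would have to respect both the Galois structure (the $V_k$ being defined over $\mathbb{Q}$) and this cyclic symmetry, hence would carve out a proper nonempty symmetry-stable subset of these $h-1$ roots; for prime $h$ no such subset exists, which would force irreducibility and thus $r_a=n=h-1$.

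The main obstacle is exactly this last step, and it is delicate for two reasons. First, $r_a$ is the minimal order of the \emph{specific} function $\phi_{a;a}$ — one solution's component — not the order of the equation shared by the $a$-th component of all solutions; the non-prime case, where $r_a<h-1$ does occur, shows that $G_a$ can fail to be ``cyclic'' and that this failure must be tied to the arithmetic of $h$. Hence the lower bound needs more than rank-$n$ irreducibility: one must verify that $\phi_{a;a}$ generates the whole module, most plausibly by computing the formal data at the irregular point $x=\infty$ (the exponential/Stokes types, indexed by $h$-th roots of unity) together with the exponents at $x=0$, and showing that the prime-order symmetry links the two local data sets with no invariant proper piece. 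Second, converting ``symmetry- and Galois-stable subset of primitive roots'' into an honest absence of proper differential submodules is where the cyclotomic combinatorics must be made rigorous; I expect the cleanest route is to identify the normal form with a Kloosterman- or generalized-hypergeometric-type connection of rank $h-1$, for which irreducibility under primality of $h$ is a known phenomenon — at the cost of first establishing that identification.
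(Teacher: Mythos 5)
The first thing to say is that this statement is a \emph{conjecture} in the paper: the authors give no general proof, only explicit low-rank verifications in Sect.\,\ref{examples} (the dominant numbers are collected in Table \ref{nums}; for the prime-Coxeter cases $A_1,A_2,A_4$ with $h=2,3,5$ one indeed reads $r_a=h-1$), and they remark that already for $sl_5(\CC)$ the dual fundamental series resisted identification with classical special functions. So there is no proof of the paper's to compare yours against; the only question is whether your proposal closes the conjecture, and it does not. Your first two paragraphs are correct and dispose of the part that is not really conjectural: from the normal form \eqref{normal-form} of Thm.\,\ref{ce-p}\,(ii), the $n+1$ coefficient vectors expressing $\phi_{a;a},\phi_{a;a}',\dots,\phi_{a;a}^{(n)}$ through $\phi$ are $\CC(x)$-linearly dependent, giving an annihilator of order $\le n$ with polynomial coefficients, and since every solution of the minimal (right-factor) annihilator of $\phi_{a;a}$ lies in a solution space all of whose elements grow moderately at $x=0$, Fuchs' criterion makes $x=0$ a regular singularity. (Small imprecision: the solution space of the dominant equation need not be \emph{spanned} by components of solutions of \eqref{normal-form}; it suffices that it is \emph{contained} in the solution space of the Wronskian-type annihilator of the component space, whose solutions all have moderate growth.)

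The genuine gap is exactly where you place it — the lower bound $r_a\ge h-1$ for prime $h$ — and while your reduction is sound (prime $h$ forces $\g=sl_h(\CC)$, and by the upper bound one must prove irreducibility-type maximality), two ingredients of the proposed mechanism are defective, not merely incomplete. First, the normal form has no scaling symmetry of order $h$: by Lemma \ref{Sai} and its Laplace--Borel transform in the proof of Thm.\,\ref{ce-p}, the solutions are $x^{s_{ai}/h-1}$ times power series in $x^{h+1}$, so the relevant periodicity is governed by $h+1$, not $h$. Concretely, for $A_2$ the scalings $x\mapsto\zeta x$ preserving \eqref{normal-form} up to diagonal gauge are exactly those with $\zeta^{8}=1$, and $\zeta=e^{2\pi i/3}$ is not among them; any orbit argument must therefore play the $(h+1)$-fold scaling against the $h$-th-root monodromy eigenvalues coming from $V_{-1}$, which is a different argument from the one you sketch (coprimality of $h$ and $h+1$ is presumably the useful arithmetic). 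Second, the Galois-stability step conflates the arithmetic Galois action on the cyclotomic eigenvalues of the local monodromy with an action on $\CC(x)$-submodules: a differential submodule need not be defined over $\mathbb{Q}(x)$, and monodromy is an analytic rather than algebraic invariant, so ``no proper Galois-stable subset of primitive $h$-th roots'' does not by itself forbid proper submodules; one would need an honest descent or rigidity argument, or, as you yourself suggest, an identification of the connection with a hypergeometric/Kloosterman-type module whose irreducibility for prime $h$ is already known. Until such a step is supplied, the last sentence of the conjecture — its entire substance — remains open, which is precisely the status in which the paper leaves it.
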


For low ranks, the above conjecture is verified, and the dual fundamental series $\phi_{a;a}$ can be expressed via elementary functions or combinations of elementary and Bessel 
functions, see Table \ref{series}. 
However, already for the case of the Lie algebra $sl_5(\mathbb C)$ 
we were not able to identify the dual fundamental series with a classical special function.

\begin{table}[h]
\begin{center}
    \begin{tabular}{| c | c |}
    \hline
    Type of $\g$ &  Special functions arising in dual fundamental series \\ \hline
    $A_1$ & exponential function\\ \hline
    rank $2$ & Bessel functions\\ \hline
    rank $3$ & Bessel functions, exponential functions\\ \hline
    $A_4$ & solutions to scalar ODEs of order $4$ \\ \hline
    $D_4$ & Bessel functions, exponential functions\\ \hline
    \end{tabular}
\end{center}
\caption{Special functions arising from the dual topological ODE of $\g$-type.} \label{series}
\end{table}

{\it Last but not least: why do we call eq. \eqref{int1} the topological differential equation?} For the case of $\g=sl_2(\mathbb C)$ the Airy resolvent \eqref{faza} appears in the expression of \cite{BDY1} for generating series of intersection numbers of $\psi$-classes on the Deligne--Mumford moduli spaces $\overline{\mathcal M}_{g,N}$; see also the formula \eqref{KdV-N-point} below. It turns out that for a simply-laced Lie algebra $\g$ a representation similar to that of \cite{BDY1} can be used for computing the E.\,Witten and H.\,Fan--T.\,Jarvis--Y.\,Ruan (FJRW) intersection numbers. We give the precise expressions in the last section of the present paper, see Theorems \ref{r-spin-thm} and \ref{N-DS-p-real} below. Connection between FJRW invariants and tau-functions of the Drinfeld--Sokolov hierarchy of $A\,D\,E$-type conjectured in \cite{W2} and proven in \cite{FSZ} and \cite{FJR} plays an important role in derivation of these expressions. The non-simply-laced analogue of the FJRW intersection numbers, according to the recent paper by S.-Q.\,Liu, Y.\,Ruan and Y.\,Zhang \cite{LRZ}, is also related to tau-functions of the Drinfeld--Sokolov hierarchies of $B\, C\, F\, G$-type.

\paragraph{Organization of the paper.} 
In Sect.\,\ref{proof} we prove Theorem \ref{Qinf}.  
In Sect.\,\ref{c-e-s} we prove Thm.\,\ref{ce-thm}, Prop.\,\ref{S-prop} and Thm.\,\ref{ce-p}.
In Sect.\,\ref{examples}, we calculate the dual essential series, in particular the dual fundamental series for several special cases. Concluding remarks are given in Sect.\,\ref{rem}.
We give  definition and examples of generalized Airy functions in Appendix A.

\paragraph{Acknowledgements.} 
We would like to thank Philip Candelas, Yassir Dinar, Si-Qi Liu, Youjin Zhang, Jian Zhou for helpful discussions. D.\,Y. is grateful to Youjin Zhang for his advising. The work is partially supported by PRIN 2010-11 Grant ``Geometric and analytic theory of Hamiltonian systems in finite and infinite dimensions" of Italian Ministry of Universities and Researches. M.\,B. is supported in part by the Natural Sciences and Engineering Research Council of Canada (NSERC) and  the Fonds de recherche Nature et t\'echnologie du Qu\'ebec (FQRNT).

\section{Proof of Theorem \ref{Qinf}} \label{proof}

Before proceeding with the proof, we introduce the {\it gradation operator}
\be\label{grad}
{\rm gr}=h\, \lambda \frac{d}{d\lambda}+\ad_{\rho^\vee}.
\ee
Recall that $\rho^\vee\in \h$ is the Weyl co-vector. The following statement \cite{KW} will be useful in the proof.

\begin{lemma} An element $a\in L(\g)$ is homogeneous of principal degree $k$ {\it iff}
\be\label{grad1}
{\rm gr}\, a=k\, a.
\ee
\end{lemma}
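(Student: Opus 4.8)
The plan is to show that ${\rm gr}$ acts as a scalar on each principal-degree homogeneous subspace of $L(\g)$, with the scalar equal to the degree itself, and then to deduce the "if" direction by an eigenspace-decomposition argument. The two implications thus split cleanly: "only if" is a direct computation, "if" is a formal consequence of diagonalizability.

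First I would unravel the two summands of ${\rm gr}=h\,\lambda\frac{d}{d\lambda}+\ad_{\rho^\vee}$ on a convenient spanning set. The principal gradation on $\g$ itself is nothing but the weight decomposition of $\ad_{\rho^\vee}$: for a root vector $X_\alpha\in\g_\alpha$ with $\alpha=\sum_i c_i\,\alpha_i$, the defining relations $\alpha_i(\rho^\vee)=1$ give $[\rho^\vee,X_\alpha]=\alpha(\rho^\vee)\,X_\alpha=\big(\sum_i c_i\big)X_\alpha$, so $\ad_{\rho^\vee}$ multiplies $X_\alpha$ by its height $\sum_i c_i$, while it annihilates $\h$ (which is abelian). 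Since $\deg E_i=1$, $\deg F_i=-1$, $\deg H_i=0$, this height is precisely the principal degree of the $\g$-part; hence $\ad_{\rho^\vee}$ is diagonalizable and records exactly the $\g$-degree.

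Next, the Euler-type operator acts on powers of $\lambda$ by $h\,\lambda\frac{d}{d\lambda}\lambda^j=hj\,\lambda^j$, which, since $\deg\lambda=h$, is multiplication by the $\lambda$-degree $hj$. Combining the two, on a homogeneous generator $X_\alpha\otimes\lambda^j$ (and likewise on $H\otimes\lambda^j$ with $H\in\h$, of height $0$) one gets ${\rm gr}\,(X_\alpha\otimes\lambda^j)=\big(hj+\sum_i c_i\big)(X_\alpha\otimes\lambda^j)$, and $hj+\sum_i c_i$ is by definition the principal degree of $X_\alpha\otimes\lambda^j$. As such elements span $L(\g)$, this shows that ${\rm gr}$ restricted to the degree-$k$ subspace $L(\g)_k$ is multiplication by $k$; in particular every homogeneous $a$ of degree $k$ satisfies ${\rm gr}\,a=k\,a$, which is the "only if" implication.

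For the converse I would use $L(\g)=\bigoplus_k L(\g)_k$, so that any $a$ decomposes uniquely as a finite sum $a=\sum_k a_k$ with $a_k\in L(\g)_k$, whence ${\rm gr}\,a=\sum_k k\,a_k$. If ${\rm gr}\,a=k\,a$, comparing expansions and using directness of the sum forces $(j-k)\,a_j=0$ for every $j$, so $a_j=0$ for $j\neq k$ and $a=a_k$ is homogeneous of degree $k$. I do not anticipate a genuine obstacle: the only point needing a little care is the identity $\alpha(\rho^\vee)=\sum_i c_i$, i.e. verifying that $\ad_{\rho^\vee}$ reproduces the principal $\g$-grading rather than some other weighting; once that is secured, the rest is the standard fact that a diagonalizable operator detects its own eigenspaces.
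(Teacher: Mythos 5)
Your proof is correct and complete. Note, however, that the paper does not prove this lemma at all: it is quoted directly from Kac--Wakimoto \cite{KW}, so your argument supplies what the authors delegate to a citation. What you give is precisely the standard argument behind that reference: the identity $\alpha(\rho^\vee)=\sum_i c_i$ (height) shows that $\ad_{\rho^\vee}$ is the grading operator of the principal gradation \eqref{prince} on $\g$ itself (annihilating $\h$ and scaling each $\g_\alpha$ by its height), the Euler operator $h\,\lambda\,\frac{d}{d\lambda}$ is the grading operator of the $\lambda$-part since $\deg\lambda=h$, and their sum \eqref{grad} therefore acts on each homogeneous component $L(\g)_k$ as multiplication by $k$; the converse then follows from the direct sum decomposition $L(\g)=\bigoplus_k L(\g)_k$ and the finiteness of the expansion of a Laurent polynomial, exactly as you say. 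You correctly isolate the one point that genuinely needs checking, namely that $\ad_{\rho^\vee}$ reproduces the principal $\g$-grading rather than some other weighting; this rests on the fact that the gradation determined on the generators $E_i,H_i,F_i$ extends to $\g$ so that $\g_\alpha$ sits in degree equal to the height of $\alpha$, which your bracket computation justifies.
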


The action of the gradation operator can be obviously extended to expressions containing fractional powers of $\lambda$.

The following lemma 
will also be useful.
\bl \label{orto}The following formul\ae\ hold true:
\be
L(\g)=\Ker \, \ad_\Lambda \oplus {\rm Im} \, \ad_\Lambda,\qquad  {\rm Im}\, \ad_\Lambda =(\Ker \, \ad_\Lambda)^{\perp},
\ee
where the orthogonality is with respect to the Cartan--Killing form.
\el
\noindent The proof can be found e.g. in \cite{DS}.


\noindent {\it Proof of Thm. \ref{Qinf}}. Let us fix an exponent $m_a$ for some $a\in \{1,\dots,n\}.$
We first prove existence and uniqueness of formal solutions of \eqref{int1} of the following form
\bea
&& M(\lambda)= \lambda^{-\frac{m_a}h}\sum_{k=0}^\infty M_k(\lambda),\quad  M_0(\lambda)=\Lambda_{m_a} \label{e-37}\\
&& M_k(\lambda)\in L(\g), \quad \deg M_k(\lambda)=m_a-(h+1)k \label{e-38}
\eea
Substituting \eqref{e-37} in \eqref{int1} and equating the terms of equal principal degrees we obtain
\bea
&& [\Lambda,M_0]=0,\\
&& [\Lambda,M_{k+1}]=- M_k'+\frac{m_a}{h} \frac1{\lambda} M_k,\qquad k\geq 0. \label{m-r}
\eea
The first condition follows due to the choice $M_0=\Lambda_{m_a}$  (see eq. \eqref{commute} above). In order to resolve eq. \eqref{m-r} for $k=0$ one has to check that
$$
h\, \lambda\, \Lambda_{m_a}' -{m_a}  \Lambda_{m_a}\in {\rm Im}\, \ad_\Lambda.
$$
Indeed, since $\deg \Lambda_{m_a}=a$ we have
$$
{\rm gr} \,\Lambda_{m_a}\equiv h\, \lambda \, \Lambda_{m_a}'+\left[ \rho^\vee, \Lambda_{m_a}\right]=m_a\Lambda_{m_a}.
$$
So, the equation \eqref{m-r} for $k=0$ becomes
$$
\left[ \Lambda, M_1\right]= \frac1{h\, \lambda}\left[ \rho^\vee, \Lambda_{m_a}\right].
$$
Due to Lemma \ref{orto} it suffices to check that the r.h.s. is orthogonal to ${\rm Ker}\, \ad_\Lambda$. Indeed,
$$
\left( \left[ \rho^\vee, \Lambda_{m_a}\right]\,|\, \Lambda_j\right)=\left(\rho^\vee\, |\,\left[ \Lambda_{m_a}, \Lambda_j\right]\right)=0.
$$
This completes the first step of the recursive procedure.

Assume, by induction, that the $k$-th term $M_k(\lambda)$ exists and, moreover, it satisfies
$$
- M_k'+\frac{m_a}{h} \frac1{\lambda} M_k\in {\rm Im}\, \ad_\Lambda.
$$
Then there exists a matrix $M_{k+1}$ satisfying eq. \eqref{m-r}. It is determined uniquely modulo an element in ${\rm Ker}\, \ad_\Lambda$. Write
\bea
&&M_{k+1}=A+B, \quad A\in {\rm Im}\, \ad_\Lambda,\, B \in {\rm Ker}\,\ad_\Lambda,\\
&& \deg\, A=\deg\, B= m_a-(h+1)(k+1). \label{degab}
\eea
Since the map
$$
\ad_\Lambda: {\rm Im}\, \ad_\Lambda\to {\rm Im}\, \ad_\Lambda
$$
is an isomorphism 
it remains to prove that the matrix $B=B(\lambda)$ is uniquely determined by the condition
$$
- M_{k+1}'+\frac{m_a}{h} \frac1{\lambda} M_{k+1}\in {\rm Im}\, \ad_\Lambda.
$$
Like above, using the degree condition \eqref{degab} along with the gradation operator \eqref{grad} recast the last condition into the form
\be\label{AB}
\left[ \rho^\vee, A+B\right]+(h+1)(k+1)(A+B) \perp {\rm Ker}\, \ad_\Lambda.
\ee
By assumption $A\perp {\rm Ker}\, \ad_\Lambda$. Also $\left[ \rho^\vee, B\right]\perp {\rm Ker}\, \ad_\Lambda$ due to commutativity of ${\rm Ker}\, \ad_\Lambda$. Thus eq. \eqref{AB} reduces to the system
\be\label{AB1}
(h+1)(k+1)\left( B\, |\, \Lambda_j\right) =-\left( \left[ \rho^\vee, A\right]\, |\, \Lambda_j\right), \quad j\in E.
\ee
Decompose now $B$ as follows:
$$
B=\sum_{b=1}^n c_b(\lambda)\, \Lambda_{m_b}(\lambda).
$$
Choosing $j=m_1, \dots, m_n$ and using the normalization \eqref{norm-Lambda-2} one finally obtains
$$
B=-\frac1{\lambda\, h\, (h+1) (k+1)} \sum_{b=1}^n \left( \left[\rho^\vee, A\right]\, |\, \Lambda_{m_{n+1-b}}\right) \Lambda_{m_b}.
$$
This concludes the inductive step and thus completes the proof of existence and uniqueness of solutions \eqref{baza1} to the topological ODE \eqref{int1}.

The series $M_1(\lambda),\dots,M_n(\lambda)$ are linearly independent since their leading terms are.
It is known from \cite{Wasow} that there exist analytic solutions to the topological ODE whose asymptotic expansions in certain sector coincide with $M_i$. 
These solutions we denote again by $M_i$. 
We conclude that 
\be\label{one-side}
{\rm Span}_{\CC} \{M_1,\dots,M_n\}\subset \SS_\infty^\reg(\g).
\ee
In particular $\dim_\CC\SS_\infty^\reg(\g)\geq n.$


On another hand, according to Kostant \cite{Kostant} for any $\lambda_0\in\CC$ the kernel of 
$
{\rm ad}_{\Lambda(\lambda_0)}: \g \to \g
$
 is of dimension $n$. This implies that 
\be
\dim_\CC \SS_\infty^\reg\leq n.
\ee

Hence $\dim_\CC\SS_\infty^\reg(\g)=n$ and $\SS_\infty^\reg(\g)={\rm Span}_\CC \{M_1,\dots,M_n\}.$ Thm.\,\ref{Qinf} is proved. $\hfill{\square}$

The above proof gives an algorithm for computing $M_a,\,a=1,\dots,n$ recursively.

\br 
Thm.\,\ref{Qinf} and Prop.\,\ref{comm-p} tell that $\SS_\infty^\reg(\g)$ forms a one-parameter family of Abelian subalgebras of $\g.$
\er

\begin{example}[The $A_n$ case]   \label{primeran}
Take the same matrix realization of $\g$ 
 as in Example \ref{An-recursion}, 
so that
 the normalized Cartan--Killing form coincides with the matrix trace.
Recall that $h=n+1,\,m_a=a$.  We have
\bea
&&  \Lambda_{j} = \Lambda^j,\qquad \rho^\vee= {\rm diag} \le(\frac {n}2, \frac {n-2}2 ,\dots, -\frac n2 \ri),\\
&& H_j=E_{j,j}-E_{j+1,j+1},\qquad \h={\rm Span}_\CC \{H_j\}_{j=1}^n.
\eea
Fix an exponent $m_a$. Consider $[L,M]=0,\,L=\p_{\lambda}+\Lambda$ and write $M=\lambda^{-\frac{a}h}\sum_{k=0}^\infty M_k,$ then
\be
M_k = A_k + B_k,\quad A_k\in {\rm Im}\, \ad_\Lambda,\, B_k \in {\rm Ker}\,\ad_\Lambda,~k\geq 0.
\ee
Here $A_k$ and $B_k$ have the principal degree $\deg A_k=\deg B_k=a-(h+1)k,$ and
\be 
B_k= \frac {-1}{ \lambda\,k\,h (h+1) }  \sum_{b=1}^{n}  {(\ad_{\rho^\vee} A_{k} \,|\, \Lambda_{n+1-b} )} \, \Lambda_b,\quad k\geq 1.
\ee
It can be easily seen that 
\bea
&& \!\!\!\!\!\! \!\!\!\!\!\! {\rm Im}\, \ad_\Lambda = {\rm Span}_{\mathbb C[\l,\l^{-1}]} \le\{
H_i\, \Lambda^j\,| \, i=1,\dots,n,~ j=0,\dots, n
\ri\},\\
&& \!\!\!\!\!\! \!\!\!\!\!\! \Tr \le(X \Lambda^j [\rho^\vee, \Lambda_\ell]\ri) =
\le\{
\begin{array}{cr}
-h\, \lambda \, x_j \, \delta_{n+1, j+\ell} & j=1,\dots, n,\\
0 & j=0,
\end{array}
\ri.~~ \mbox{for } X = \sum_{i=1}^n x_i \, H_i.
\eea
Thus if $A_k\,(k\geq 1)$ is of form
$
A_k = \sum_{j=0}^n X^{(j)}_k(\lambda) \,  \Lambda^j(\lambda),~ X^{(j)}_k= \sum_{i=1}^n x_{k,i}^{(j)} \, H_i ~(0\leq j \leq n),$ then
\be
B_k(\lambda)= \frac {1} {  (h+1) k}  \sum_{j=1}^n x^{(j)} _{k,j}(\lambda) \, \Lambda_j(\lambda),\qquad k\geq 1.
\ee
Denote by $\zeta=\exp({2\pi \sqrt{-1}}/{h})$ a root of unity. The map $\ad_\Lambda$ acts on ${\rm Im}\,\ad_\Lambda$ as follows
\bea
&& \ad_\Lambda (X\Lambda^j) = \wt X \Lambda^{j+1},\,j=0,\dots,n,~ X,\, \wt X\in\h, \\
&&  \le(
\begin{array}{r}
\tilde{x}_1\\
\tilde{x}_2\\
\vdots\\
\tilde{x}_n\\
\end{array}\ri)
= - \le(
\begin{array}{rrrrrrrrrr}
2 & -1 & 0 & \dots&&0\\
1 & 1 & -1 & \dots&&0\\
1& 0 & 1 &- 1& \dots &0\\
\vdots &&&\ddots&\ddots\\
1 & 0 & \dots && 1& -1\\
1&0&\dots &&0&1
\end{array}
\ri) \le(
\begin{array}{r}
x_1\\
x_2\\
\vdots\\
x_n\\
\end{array}\ri).\label{81}
\eea
Define
$
\Omega_\nu  = {\rm diag} \le(\zeta^{\nu(j-1)}  \ri)_{j=1\dots n+1},
~  \nu  = 1,\dots n,
$
then we have
\be
\ad_{\Lambda} (\Omega_\nu \, \Lambda^j) = (\zeta^\nu -1) \, \Omega_\nu \, \Lambda^{j+1},\quad j=0,\dots,n.
\ee
A direct computation gives 
\bea
\&\& \rho^\vee = \sum_{\nu=1}^{n} \frac {\zeta^\nu}{ \zeta^\nu-1} \Omega_\nu,  \quad \ad_{\rho^\vee} \Lambda^0=0,\quad \ad_{\rho^\vee} \Lambda^\ell = 
\le(\sum_{\nu=1}^{n}  \frac {\zeta^\nu
- \zeta^{\nu(\ell+1)}
 }{ \zeta^\nu-1} \Omega_\nu  \ri)\Lambda^\ell,~~\ell\geq 1;
 \\
\&\& \ad_{\Lambda}^{-1} \ad_{\rho^\vee} \Lambda^{\ell}  =
 \le(\sum_{\nu=1}^{n}  \frac {\zeta^\nu
- \zeta^{\nu(\ell+1)}
 }{ \le(\zeta^\nu-1\ri)^2} \Omega_\nu  \ri)\Lambda^{\ell-1},~~\ell\geq 1.
\eea
Using these formul\ae\ and consider the leading term of $M$ given by $M_0 = \Lambda_a,$
one immediately obtains the formula of the first order approximation \eqref{first-order}.  
\end{example}

\section{Essential series and dual essential series} \label{c-e-s}
In this section we apply a reduction approach and the technique of Laplace--Borel transform 
to prove Thm.\,\ref{ce-thm}, Prop.\,\ref{S-prop} and Thm.\,\ref{ce-p}. 

First of all, we introduce the lowest weight structure constants.
\bd 
Fix a choice of $\gamma^1,\dots,\gamma^n$ and $I_+$ as above. 
The coefficients $c^{pm}_{qijs}$ uniquely determined by
\be\label{str-c}
\le[\ad_{I_+}^p\, \gamma^q, \, \ad_{I_+}^m \gamma^i \ri]=\sum_{j=1}^n \sum_{s=0}^{2m_{j}} c^{pm}_{qijs} \, \ad_{I_+}^s \gamma^j.
\ee
are called the lowest weight structure constants of $\g$.
\ed
In this paper we will only use part of these structure constants, namely $c^m_{ijs}:=c^{0m}_{nijs}$. The above definition \eqref{str-c} becomes
\be\label{str}
\le[\gamma^n, \, \ad_{I_+}^m \gamma^i \ri]=\sum_{j=1}^n \sum_{s=0}^{2m_{j}} c^m_{ij s} \, \ad_{I_+}^s \gamma^j.
\ee
\begin{lemma}\label{vanishing0}
The constants $c^m_{i j s}$ are zero unless 
\be
s-m_{j}=m-m_i-(h-1).
\ee
\end{lemma}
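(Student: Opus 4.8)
The plan is to prove the vanishing purely by a weight-counting argument, exploiting that $\ad_{\rho^\vee}$ acts semisimply on $\g$ and that both sides of \eqref{str} are eigenvectors of $\ad_{\rho^\vee}$. Since $\rho^\vee\in\h$, the operator $\ad_{\rho^\vee}$ is diagonalizable, and the lowest weight decomposition $\g=\bigoplus_{i=1}^n\L^i$ equips $\g$ with the basis $\{\ad_{I_+}^s\gamma^j : 1\le j\le n,\ 0\le s\le 2m_j\}$; the whole argument amounts to recording the $\ad_{\rho^\vee}$-eigenvalue of each of these basis vectors and of the left-hand bracket, and then matching.

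First I would establish that $\ad_{I_+}$ raises the $\ad_{\rho^\vee}$-eigenvalue by exactly one. If $[\rho^\vee,x]=\mu\,x$, then the Jacobi identity together with the relation $[\rho^\vee,I_+]=I_+$ from \eqref{SL2princ} gives $[\rho^\vee,[I_+,x]]=[[\rho^\vee,I_+],x]+[I_+,[\rho^\vee,x]]=(\mu+1)[I_+,x]$. Since the lowest weight vectors satisfy $[\rho^\vee,\gamma^i]=-m_i\,\gamma^i$, induction on $s$ shows that $\ad_{I_+}^s\gamma^i$ is an $\ad_{\rho^\vee}$-eigenvector of eigenvalue $s-m_i$ for every admissible $s$. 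Thus each basis vector carries a definite $\ad_{\rho^\vee}$-weight.

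Next I would compute the eigenvalue of the left-hand side of \eqref{str}. Because $\ad_{\rho^\vee}$ is a derivation, the bracket of two eigenvectors is an eigenvector whose eigenvalue is the sum of the two. Here $\gamma^n$ has eigenvalue $-m_n=-(h-1)$ (using $m_n=h-1$ from \eqref{expo}), while $\ad_{I_+}^m\gamma^i$ has eigenvalue $m-m_i$. Hence $[\gamma^n,\ad_{I_+}^m\gamma^i]$ lies in the single $\ad_{\rho^\vee}$-eigenspace of eigenvalue $m-m_i-(h-1)$. On the other side of \eqref{str}, the term $\ad_{I_+}^s\gamma^j$ has eigenvalue $s-m_j$, so comparing eigenspaces term by term forces $c^m_{ijs}=0$ whenever $s-m_j\neq m-m_i-(h-1)$, which is exactly the assertion of the lemma.

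I expect no serious obstacle: the statement is a grading identity, and once the eigenvalue $s-m_i$ of $\ad_{I_+}^s\gamma^i$ is identified the conclusion is immediate. The only point that deserves a line of care is the inductive step showing that $\ad_{I_+}$ shifts the $\ad_{\rho^\vee}$-weight by precisely $+1$, which rests solely on the $sl_2$-triple relations \eqref{SL2princ} and the Jacobi identity; note in particular that the argument is insensitive to the fact that eigenvalues may be shared across different modules $\L^i$, since it only uses that each basis vector sits in one eigenspace and that the left-hand side sits in one eigenspace.
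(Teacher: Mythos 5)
Your proof is correct and takes essentially the same route as the paper: the paper's one-line proof compares the principal degrees of both sides of \eqref{str} using the fact that $\deg\,\ad_{I_+}^{s}\gamma^j=s-m_j$, and since on $\g$ the principal degree is precisely the $\ad_{\rho^\vee}$-eigenvalue, your weight-counting argument is the same comparison. The only difference is that you derive the weight formula $s-m_i$ for $\ad_{I_+}^s\gamma^i$ from the $sl_2$-triple relations \eqref{SL2princ}, a fact the paper simply invokes.
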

\begin{proof} 
By using the fact that 
$$\deg\,\ad_{I_+}^{s}\, \gamma^j=s-m_j,\qquad \forall \, s\in\{0,\dots,2m_j\}$$ 
and 
by comparing the principal degrees of both sides of \eqref{str}.
\end{proof}
\bl  \label{vanish}
If $i,j\in\{1,\dots,n\}$ satisfy $m_i+m_j=h-1$ then 
$c_{ij0}^{2m_i}=0.$
\el
\begin{proof}
It suffices to consider the case $i\leq n-1$.  By Lemma \ref{vanishing0} we have
\be
\le[\gamma^n, \, \ad_{I_+}^{2m_i} \gamma^i \ri]=\sum_{0\leq m_i+m_j-(h-1)\leq 2m_j} \, c^{2m_i}_{ij,m_i+m_j-(h-1)} \, \ad_{I_+}^{m_i+m_j-(h-1)} \gamma^j.
\ee
We will use the following orthogonality conditions \cite{Kostant} 
\be\label{ortho}
(\gamma^{i_1}\,|\,\ad_{I_+}^{2m_{i_2}} \gamma^{i_2} )=0,\qquad  \mbox{if } i_1\neq i_2.
\ee
It follows that
\be
\le(\rho^\vee \, \bigg| \le[\gamma^n, \, \ad_{I_+}^{2m_i} \gamma^i \ri]\ri)=\le([\rho^\vee,\gamma^n] \, \bigg | \, \ad_{I_+}^{2m_i} \gamma^i \ri)=-(h-1)\,\le(\gamma^n \, \bigg | \, \ad_{I_+}^{2m_i} \gamma^i \ri)=0.
\ee
On another hand, for $m_i+m_j-(h-1)\geq 2$ we have
\be
\le(\rho^\vee\,\bigg|\,\ad_{I_+}^{m_i+m_j-(h-1)} \gamma^j\ri)=\le(I_+\,\bigg|\,\ad_{I_+}^{m_i+m_j-(h-1)-1} \gamma^j\ri)=0.
\ee
For $m_i+m_j-(h-1)=1$ we have
\be
\le(\rho^\vee\,\big|\,\ad_{I_+} \gamma^j\ri)=(I_+\,|\,\gamma^j)=0.
\ee
The last equality is due to $i\neq n$ and so $j$ cannot be $1$, and $I_+=const \cdot \ad_{I_+}^2\,\gamma^1$. Hence 
$c_{ij0}^{2m_i}=0.$ The lemma is proved.
\end{proof}

\noindent {\it Proof of Theorem \ref{ce-thm}}.
Let $M$ be any solution of \eqref{int1}. It can be uniquely decomposed in the form
\be
M=\sum_{i=1}^n \sum_{m=0}^{2m_{i}} K_{im} \,\ad_{I_+}^m \gamma^i
\ee
for some functions $K_{im}=K_{im}(\lambda)$.
Substituting this expression into equation \eqref{int1} we have
\be\label{master}
\sum_{i=1}^n \sum_{m=0}^{2m_{i}} K_{im}' \, \ad_{I_+}^m \gamma^i+\sum_{i=1}^n \sum_{m=1}^{2m_{i}} K_{i,m-1} \ad_{I_+}^{m} \gamma^i+\lambda\,\le[\gamma^n, \, \sum_{i=1}^n \sum_{m=0}^{2m_{i}} K_{im} \ad_{I_+}^m \gamma^i \ri]=0.
\ee
Here $'$ means the derivative w.r.t. $\lambda.$  
Substituting \eqref{str} into \eqref{master} we find
\be
\sum_{j=1}^n \sum_{s=0}^{2m_{j}} (K_{js}' +K_{j,s-1} ) \, \ad_{I_+}^{s} \gamma^j+
\lambda \, \sum_{j=1}^n \sum_{s=0}^{2m_{j}} \sum_{i=1}^n \sum_{m=0}^{2m_{i}} K_{im} \, c_{i j s} \, \ad_{I_+}^s \gamma^j=0
\ee
where $K_{j,-1}:=0.$ Now using Lemma \eqref{vanishing0} we find
\be\label{master2}
K_{j,s-1}=-K_{js}'  -  \lambda \,  
\sum_{ m_i \geq s-m_{j}+(h-1)}  \, c^{s+m_i-m_{j}+(h-1)}_{i j s} \, K_{i,s+m_i-m_{j}+(h-1)}
\ee
where $0\leq s\leq 2m_j.$
Due to the inequality $m_i-m_j+h-1\geq m_i >  0$ the system \eqref{master2} is of triangular form. So we can solve it for  the vector $\le(K_{j,s-1}\ri)_{j=1}^{n}$ 
as long as the vectors $\le(K_{j s'}\ri)_{j=1}^n,\  s'\geq s+1$ are known.
Note that, by definition $S_{j}=K_{j,\,2m_{j}}$. Thus
$K_{im},\,m=0,\dots, 2m_{i}-1$ are determined as linear combinations of $S_{1},\dots,S_{n}$ and their derivatives. Further noticing that
\be\label{ine-s}
s+m_i-m_{j}+(h-1)\geq s+m_i,\quad \forall\,i,j
\ee
we obtain the more subtle expressions \eqref{Kim-S}.
Indeed, it is easy to see that
\be\label{simple-K}
K_{im}=(-1)^m \, S_i^{(2m_i-m)},\qquad m_i\leq m\leq 2m_i-1.
\ee
This observation together with \eqref{ine-s},\,\eqref{master2} yields \eqref{Kim-S}. Note that the coefficients $k^0_{imuv}$ and $k^1_{imuv}$ in \eqref{Kim-S} are uniquely specified by the lowest weight structure constants of $\g.$

Taking $s=0$ in Eqs. \eqref{master2} we obtain
a system of linear ODEs \eqref{S-red} of the form
\be\label{S-red-proof}
S_i^{(2m_i+1)}=\sum_{u=1}^n \sum_{ v=0}^{ 2m_i}  k_{iuv}^0  \, S_u^{(v)}+ \lambda\, \sum_{u=1}^n \sum_{ v =0 }^{ m_i}  k_{iuv}^1 \, S_u^{(v)}, \quad i=1, \dots, n
\ee
where the constant coefficients $k^0_{iuv}$ and $k^1_{iuv}$ are determined by the lowest weight structure constants of $\g$. 
It is now clear that there is a one-to-one correspondence between solutions of the topological ODE \eqref{int1} and 
solutions of the system \eqref{S-red-proof}. Thus these two systems of ODEs are equivalent.

It remains to show that $k^1_{iu0}=0,\,\forall \,i,u\,\in\{1,\dots,n\}.$
The $(s=0)$-equations in \eqref{master2} read as follows
\be \label{S-red-K}
K_{j0}'  +  \lambda \,  
\sum_{ m_i +m_{j} \geq h-1}  \, c^{m_i-m_{j}+(h-1)}_{i j 0} \, K_{i,m_i-m_{j}+(h-1)}=0.
\ee
Taking the $\lambda$-derivative in the $(m=0,\,i=j)$-equations of \eqref{Kim-S} we obtain
\bea
\&\& K_{j0}' =S_j^{(2m_j+1)}+\sum_{u=1}^n \sum_{ v=0}^{ 2m_j-1}  k_{j0uv}^0  \, S_u^{(v+1)}+\sum_{u=1}^n \sum_{ v =0}^{ m_j-1}  k_{j0uv}^1 \, S_u^{(v)}\nn\\
\&\& \qquad\qquad\qquad+ \lambda\, \sum_{u=1}^n \sum_{ v =0}^{ m_j-1}  k_{j0uv}^1 \, S_u^{(v+1)}. \label{Kj0'}
\eea
From equations \eqref{Kj0'},\,\eqref{S-red-K},\,\eqref{S-red-proof} we know that $K_{j0}'$ does not contribute to $k^1_{iu0}$.
The only possible terms that can contribute to $k^1_{iu0}$ are
\be\label{part-p10}
\sum_{ m_i +m_{j} \geq h-1}  \, c^{m_i-m_{j}+(h-1)}_{i j 0} \, K_{i,m_i-m_{j}+(h-1)}
\ee
Noticing again the inequalities
$
m_i-m_j+(h-1)\geq m_i
$
as well as \eqref{simple-K}, we find that
\eqref{part-p10} do not contribute to $k^1_{iu0}$ unless
\be
m_i-m_j+(h-1)=2m_i.
\ee
However, by Lemma \ref{vanish} we know that $c_{ij0}^{2m_i}=0$ if $m_i+m_j=h-1.$ Hence $k^1_{iu0}\equiv0.$

The theorem is proved. 
\QED

Equations \eqref{S-red} can be written in the following form (which will be
used later)
\be\label{S-red-matrix}
 \sum_{v=0}^{2h-1} p^0_{v} \, S^{(v)}+\lambda  \,  \sum_{v=1}^{h} p^1_{v} \, S^{(v)}=0,\qquad S=(S_1,\dots,S_n)^T
\ee
where $p^0_{v}$ and $p^1_{ v}$ are constant $n\times n$ matrices,  $(p^0_v)_{iu}=k^0_{iuv},\,(p^1_v)_{iu}=k^1_{iuv}.$
These matrices satisfy $(p^0_{2m_i+1})_{\,i, \,2m_i+1}=-1$ and the following vanishing conditions
\bea\label{vani}
&&
(p^0_v)_{iu}=0 \quad {\rm if}\quad v>2m_i+1,\\
\nn
&&
(p^1_v)_{iu}=0\quad {\rm if} \quad v>m_i.
\eea

Before proving Prop.\,\ref{S-prop} and Thm.\,\ref{ce-p} we do some preparations.
Recall that the essential series $S_{a;i}$ of $\g$ are defined as the $S$-coefficients of 
the generalized Airy resolvents $M_a.$ 
\begin{lemma}   \label{Sai}
For any $a\in \{1,\dots,n\}$, the essential series $S_{a;i},\,i=1,\dots,n$ have the form
\be\label{sai}
S_{a;i} = \lambda^{-\frac{s_{ai}} h} \, \sum_{m=0}^\infty \, c^a_{i,m} \, \lambda^{-m(h+1)},\quad c^{a}_{i,m}\in \CC
\ee
where $s_{ai}$ is a positive integer satisfying
\be \label{mod}
s_{ai}\equiv m_a ~~ {\rm mod} \,h,\qquad s_{ai}\equiv m_i ~~ {\rm mod} \,(h+1).
\ee
In particular, the essential series $S_{a;a}$ is non-zero and has the form
\be\label{Saa}
S_{a;a} = \lambda^{-\frac{m_a}h} \, \sum_{m=0}^\infty \, c^a_{m} \, \lambda^{-m(h+1)},\quad c^a_{0}\neq 0,~c^a_{m}\in \CC.
\ee
\end{lemma}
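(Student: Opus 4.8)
The plan is to read off the two congruences in \eqref{mod} from two structural features of the generalized Airy resolvent $M_a$ established in Theorem \ref{Qinf}, and then to pin down the leading behaviour of the diagonal series $S_{a;a}$ by a $\lambda\to 0$ analysis of the top graded piece of $M_a$. First I would record the principal degree of the basis vectors used in the decomposition \eqref{;;}. Since $[\rho^\vee,\gamma^i]=-m_i\gamma^i$ and $\gamma^i$ is $\lambda$-independent, the gradation operator \eqref{grad} gives $\deg\gamma^i=-m_i$; as $\deg I_+=1$ (from \eqref{SL2princ}), the operator $\ad_{I_+}$ raises the principal degree by one, so $\deg\,\ad_{I_+}^{2m_i}\gamma^i=m_i$. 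Because the basis $\{\ad_{I_+}^m\gamma^i\}$ is $\lambda$-independent, each coefficient $S_{a;i}$ is a fixed linear projection of the $\g$-part of $M_a$, hence carries exactly the $\lambda$-powers occurring in $M_a$.

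Next I would extract the two congruences. By \eqref{baza1} we have $M_a=\lambda^{-m_a/h}\,N_a$ with $N_a\in\g\otimes\CC[\lambda,\lambda^{-1}]$, so every exponent of $\lambda$ appearing in $M_a$, and therefore in $S_{a;i}$, lies in $-m_a/h+\ZZ$; this is $s_{ai}\equiv m_a\pmod h$. The mod $(h+1)$ congruence comes from the principal grading: again by \eqref{baza1} the homogeneous components of $M_a$ have principal degrees $0,-(h+1),-2(h+1),\dots$, all divisible by $h+1$. A monomial $\lambda^{r}\,\ad_{I_+}^{2m_i}\gamma^i$ occurring in $S_{a;i}\,\ad_{I_+}^{2m_i}\gamma^i$ is homogeneous of principal degree $hr+m_i$, so $hr+m_i\equiv 0\pmod{h+1}$; for the leading exponent $r=-s_{ai}/h$ this reads $s_{ai}\equiv m_i\pmod{h+1}$. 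Combining the two conditions, the integer $hr$ is determined modulo $h(h+1)$ by the Chinese Remainder Theorem (as $\gcd(h,h+1)=1$), so the admissible exponents are exactly $r=-s_{ai}/h-m(h+1)$, $m\in\ZZ$, and regularity of $M_a$ at infinity keeps only $m\geq 0$. This yields \eqref{sai} together with \eqref{mod}.

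It remains to treat $S_{a;a}$, i.e. to show its leading exponent is $s_{aa}=m_a$ with nonzero coefficient. The principal-degree-$0$ part of $M_a$ is $\lambda^{-m_a/h}\Lambda_{m_a}$. Writing $\Lambda_{m_a}=g_0+g_1\lambda$ with $\deg g_0=m_a$ and $\deg g_1=m_a-h$ (the only degrees in range, since $\g$ is graded in $[-(h-1),h-1]$ and $m_a\in[1,h-1]$), the identity $[\Lambda,\Lambda_{m_a}]=0$ at order $\lambda^0$ forces $[I_+,g_0]=0$, i.e. $g_0\in\Ker\ad_{I_+}$. I claim $g_0\neq 0$: otherwise $\Lambda_{m_a}=g_1\lambda$ with $[I_+,g_1]=0$ and $\deg g_1=m_a-h<0$, but $\Ker\ad_{I_+}={\rm Span}\{\ad_{I_+}^{2m_i}\gamma^i\}$ is concentrated in strictly positive degrees $m_i\geq 1$, forcing $g_1=0$ and hence $\Lambda_{m_a}=0$, a contradiction. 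Thus $g_0$ is a nonzero element of $\Ker\ad_{I_+}$ of degree $m_a$, so it has a nonzero component along $\ad_{I_+}^{2m_a}\gamma^a$; reading off the $S_{a;a}$-coefficient gives leading term $c^a_0\,\lambda^{-m_a/h}$ with $c^a_0\neq 0$, which is \eqref{Saa} with $s_{aa}=m_a$.

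The step I expect to be delicate is this last one in the presence of repeated exponents. If $m_b=m_a$ for some $b\neq a$ (which occurs for $\g=D_{2k}$), then the degree-$m_a$ subspace of $\Ker\ad_{I_+}$ is multidimensional and $g_0$ is only guaranteed to be a nonzero combination of the $\ad_{I_+}^{2m_b}\gamma^b$ with $m_b=m_a$; one must still argue that its $\ad_{I_+}^{2m_a}\gamma^a$-component does not vanish. This requires using the normalization \eqref{norm-Lambda-2} to separate the two elements $\Lambda_{m_a}$ of equal degree together with a compatible choice of the $\gamma^b$, and is precisely the point from which the $D_{2k}$ exceptions in Proposition \ref{S-prop} originate. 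Everything else is bookkeeping with the two gradings.
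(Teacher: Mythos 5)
Your derivation of \eqref{sai}--\eqref{mod} is correct, and it is essentially the paper's own argument: the paper performs the same double bookkeeping, reading $s_{ai}\equiv m_a \ ({\rm mod}\ h)$ off the prefactor $\lambda^{-m_a/h}$ in \eqref{baza1} and the second congruence from matching $\deg M_{a,k}=m_a-(h+1)k$ against $\deg\bigl(\lambda^{j}\,\ad_{I_+}^{2m_i}\gamma^i\bigr)=hj+m_i$; instead of your CRT packaging it solves $k=m_a-m_i+uh$, $u\geq 0$ explicitly, which also yields the positivity $s_{ai}>0$ that you leave implicit (in your setup it follows from $s_{ai}=m_i+(h+1)k_0$, $k_0\geq 0$, so this is only a presentational omission). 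Where you genuinely diverge is the nonvanishing $c^a_0\neq 0$ in \eqref{Saa}. The paper does not argue this directly: it cites Kostant \cite{Kostant} (p.~1014) for the decomposition $\Lambda_{m_a}=\lambda\,\beta+v\,\ad_{I_+}^{2m_a}\gamma^a$ with $v\neq 0$. Your replacement --- write $\Lambda_{m_a}=g_0+g_1\lambda$, extract $[I_+,g_0]=0$ from the $\lambda^0$-part of $[\Lambda,\Lambda_{m_a}]=0$, and exclude $g_0=0$ because it would put a nonzero $g_1$ in $\Ker\,\ad_{I_+}$ at the negative degree $m_a-h$ --- is a correct, self-contained reproof of Kostant's decomposition whenever the exponent $m_a$ is simple, which is a real gain: it removes the external citation for every $\g$ other than $D_{2k}$.

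For $D_{2k}$, however, the lemma asserts \eqref{Saa} for \emph{all} $a$, including $a\in\{n/2,n/2+1\}$, and there your proof stops exactly where you say it does; as written this is a genuine gap, since a nonzero $g_0$ in the two-dimensional degree-$(n-1)$ piece of $\Ker\,\ad_{I_+}$ may a priori have vanishing component along $\ad_{I_+}^{2m_a}\gamma^a$ --- and indeed, if $\gamma^{n/2}$ were chosen so that $\ad_{I_+}^{2m_{n/2}}\gamma^{n/2}$ is proportional to the $\lambda^0$-part of $\Lambda_{m_{n/2+1}}$ rather than of $\Lambda_{m_{n/2}}$, the leading coefficient of $S_{n/2;n/2}$ would actually vanish, so some compatibility condition is not optional. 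The gap can be closed inside your framework in two steps. First, the $\lambda^0$-parts of $\Lambda_{m_{n/2}}$ and $\Lambda_{m_{n/2+1}}$ are linearly independent: a nonzero combination with vanishing $\lambda^0$-part would equal $g\lambda$ with $g\in\g$ and $[\Lambda,g]=0$; separating powers of $\lambda$ in $[I_++\lambda E_{-\theta},g]=0$ gives $[I_+,g]=0$, so $g\in\Ker\,\ad_{I_+}$ at the negative degree $n-1-h$, forcing $g=0$ by your own degree argument. Second, one then \emph{chooses} $\gamma^{n/2},\gamma^{n/2+1}$ so that $\ad_{I_+}^{2m_a}\gamma^a$ is proportional to the $\lambda^0$-part of $\Lambda_{m_a}$ for $a=n/2,\,n/2+1$; this is legitimate because of the freedom recorded at the end of Sect.~\ref{c-e-s} (for $D_{n}$ with $n$ even the $\gamma$'s are only fixed up to invertible linear combinations), and it is precisely the normalization implicit in the Kostant decomposition that the paper invokes. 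So your proposal is complete outside $D_{2k}$, and needs these two added sentences (or the citation the paper uses) to cover the repeated exponent.
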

\begin{proof} 
Fix an integer $a\in\{1,\dots,n\}.$
By \eqref{baza1} we know that 
$S_{a;i}\in \, \lambda^{-\frac{m_a}h} \cdot \mathbb{C}((\lambda^{-1})), \quad i=1,\dots,n,$
where $\mathbb{C}((\lambda^{-1}))$ denotes the set of formal Laurent series at $\lambda=\infty.$ 
If $S_{a;i}$ is zero then it has the form \eqref{sai} with $c_{i,m}^a=0,\,m\geq 0.$
Let $S_{a;i}$ be non-zero. Recall the expansion \eqref{baza1}
\begin{eqnarray}
\&\&
M_a(\lambda) = \lambda^{-\frac{m_a}{h}} \left[ \Lambda_{m_a} +\sum_{k=1}^\infty M_{a,k}(\lambda)\right],\nn\\
\&\&
M_{a,k}(\lambda)\in L(\g),\quad \deg M_{a,k}(\lambda)=m_a-(h+1)k, \quad a=1, \dots, n.\nn
\nn
\eea
Let $S_{a;i,k}(\lambda)$ denote the $S$-coefficients of $M_{a,k}(\lambda)$, we have
\be
S_{a;i}=\lambda^{-\frac{m_a}h}\,\sum_{k=0}^\infty \, S_{a;i,k}(\lambda).
\ee
Noticing that
$
\deg \, \ad_{I_+}^{2m_i}\gamma^i=m_i
$
we obtain 
\be\deg \,S_{a;i,k}(\lambda) = m_a-m_i-(h+1)k,\quad k\geq 0.\ee
Since $\deg\,\lambda=h$ we conclude that 
\be
m_a-m_i-(h+1)k \equiv 0~({\rm mod} \, h).
\ee
Hence $S_{a;i,k}=0$ unless there exists an integer $u$ such that 
$
k=m_a-m_i+u \, h,
$
and in this case 
\be
S_{a;i,k}=const\cdot \lambda^{-(m_a-m_i)-u(h+1)},
\ee
which yields \eqref{mod}. Here, note that $k\geq 0$ implies $u\geq 0$, hence $s_{ai}>0.$

To prove \eqref{Saa} it suffices to show that $c^a_{0}\neq 0$. It has been proved by Kostant that the elements $\Lambda_{m_a},\,a=1,\dots, n$ have the 
following decompositions\footnote{We would like to thank Yassir Dinar for bringing  this useful result of \cite{Kostant} to our attention.} (cf. p.\,1014 in \cite{Kostant})
\be
\Lambda_{m_a}= \lambda \, \beta+ v\, \ad_{I^+}^{2m_a}\, \gamma^a,\quad \deg \beta=-(h-m_a)
\ee
where $v$ are some non-zero constant and $\beta\in\g$. Here, we have also applied the facts
\be 
\Ker \,\ad_{I^+}={\rm Span} \{\ad_{I^+}^{2m_1} \, \gamma^1,\dots,\ad_{I_+}^{2m_n}\,\gamma^n\},\qquad \deg\, \ad_{I_+}^{2m_i}\, \gamma^i=m_i.
\ee
The lemma is proved.
\end{proof}

Let us now prove Prop.\,\ref{S-prop}.
\begin{proof} 
The statement 1) follows from \eqref{sai}. 
The statement 2) follows from \eqref{Kim-S}   because $M_a$ are linearly independent.
The statement 3) is contained in Lemma \ref{Sai}.
Let $M$ be a regular
solution to \eqref{int1} such that the $S_a$-coefficient of $M$ coincides with $S_{a;a}$.
If the exponents of $\g$ are pairwise distinct then $M$ must be equal to $M_a$ since $S_{a;a}$ is non-zero.
Slight modifications can be done for $D_n$ with $n$ even as described in the content of the proposition.
The proposition is proved. 
\end{proof}

\noindent 
For any $a\in\{1,\dots,n\}$, Thm.\,\ref{ce-thm} reduces $M_a$ consisting of $n(h+1)$ scalar series to $n$ scalar series $S_{a;j},\,j=1,\dots,n.$
Prop.\,\ref{S-prop} further reduces $\{S_{a;j}\}_{j=1}^n$ to one scalar series $S_{a;a}$.

Finally let us consider the Laplace--Borel transform and prove Thm.\,\ref{ce-p}.
\begin{proof}
Equations \eqref{Kim-dual},\,\eqref{first-dual} follow directly from equations \eqref{Kim-S},\,\eqref{S-red}. So i) is proved.

To show ii), we will construct a basis of formal solutions to \eqref{first-dual}. Indeed, we consider the Laplace--Borel transform of the essential series $S_{a;i}(\lambda)$ 
\be\label{lb-trans-S}
S_{a;i}(\lambda)=\int_{C_a} \, \phi_{a;i} (x) \, e^{-\lambda\,x} dx
\ee
for suitable contours $C_a.$
By Lemma \ref{Sai} we have
\be
\phi_{a;i}= x^{\frac{s_{ai}}h-1} 
\, \sum_{m=0}^\infty \, 
\frac{c^a_{i,m}} {\G\le(\frac{s_{ai}}h+m(h+1)\ri)} \, x^{m(h+1)}
\ee
where the coefficients $c^a_{i,m}$ are defined by eq. \eqref{sai}.
For $i\neq a$, equations \eqref{mod} imply that the integers $s_{ai}$ satisfy
\be
s_{ai} \geq m_a+h \mbox{ and } s_{ai}\geq m_i+h+1
\ee
if all exponents are pairwise distinct; for the case $\g$ is of the $D_{n=2k}$ type, the above estimates are also true after possibly a linear change 
\bea
\&\&\phi_{n/2}\mapsto \, a_1\,\phi_{n/2}+a_2\,\phi_{n/2+1},\label{linear-a}\\
\&\&\phi_{n/2+1}\mapsto \, b_1\,\phi_{n/2}+b_2\,\phi_{n/2+1}\label{linear-b}
\eea
of two solutions $\phi_{n/2},\,\phi_{n/2+1}$.
In any case, for $i=a$ we have
\be
\phi_{a;a}= x^{\frac{m_a}h-1} 
\, \sum_{m=0}^\infty \, 
\frac{c^a_{m}} {\G\le(\frac{m_{a}}h+m(h+1)\ri)} \, x^{m(h+1)},\qquad c^{a}_0\neq 0.
\ee
Hence we have obtained a set of independent formal solutions of of \eqref{first-dual}
$$\phi_{1;j},\dots,\phi_{n;j},$$
and we collect them to a matrix solution $\Phi$ by defining $\Phi_{ia}=\phi_{a;i}$, then we have
\be \label{asym}
\Phi=\le(I+O(x)\ri)\,\le(\begin{array}{cccc}
x^{\frac{m_1}h-1} & 0 & \cdots & 0  \\
0 & x^{\frac{m_2}h-1}  & \ddots & \vdots \\
\vdots & \ddots&  \ddots & 0\\
0& 0 & \cdots & x^{\frac{m_n}h-1}\\
\end{array}\ri),\qquad x\rightarrow 0.
\ee
Note that Equations \eqref{first-dual} can be written as
\be\label{first-dual-matrix}
\sum_{v=0}^{2h-1} (-x)^v \, p_{v}^0 \, \phi+ \sum_{v=1}^{h}  (-1)^v \, p_{v}^1 \, \le(x^v\, \phi'+v\,x^{v-1} \phi \ri)=0, \quad \phi=(\phi_1,\dots,\phi_n)^T
\ee
where $p^0_v$ and $p^1_v$ are $n\times n$ matrices defined in \eqref{S-red-matrix}.
So we have
\be\label{BC}
B(x)\,\Phi'(x)=C(x)\,\Phi(x),
\ee
where $B,\,C$ are $n\times n$ matrices whose entries are polynomials in $x$ with coefficients determined uniquely by the lowest weight structure constants. 
More explicitly, the expressions of $B,C$ are given by 
$$B(x)= \sum_{v=1}^{h}  (-1)^v \, x^v\, p_{v}^1,\qquad C(x)=\sum_{v=0}^{2h-1} (-x)^v \, p_{v}^0+\sum_{v=1}^h\,(-1)^v\,v\,x^{v-1}\,p_v^1.$$
We are going to show that the matrix $B(x)$ is of the following form 
\be\label{form-B}
B=\le(\begin{array}{cccc}
0 & \cdots  & 0 & b_1 \cdot x  \\
\vdots & \reflectbox{$\ddots$} & b_2\cdot x & * \\
0 & \reflectbox{$\ddots$} &  \reflectbox{$\ddots$}  & \vdots \\
b_n\cdot x&  * & \cdots & *\\
\end{array}\ri)
\ee
where the asterisks  denote polynomials in $x$ possessing at least a double zero at $x=0.$ 
Recall that $p^1_v$ are defined as the matrix of coefficients of   $\lambda\times S_j^{(v)},\,j=1,\dots,n$ in the following equations
\be \label{Kj0prime}
K_{j0}'  +  \lambda \,  
\sum_{ m_i +m_{j} \geq h-1}  \, c^{m_i-m_{j}+(h-1)}_{i j 0} \, (-1)^{m_i+m_j+h-1} S_{i}^{(m_i+m_{j}-(h-1))}=0,\qquad '=\frac{d}{d\lambda}.
\ee
The summand with $m_i+m_j=h-1$ in \eqref{Kj0prime} does not contribute to the matrix $B(x)$, since
we have already proved in Thm.\,\ref{ce-thm} that $p^1_0=0.$ 
Let us consider the summands in \eqref{Kj0prime} with $m_i+m_j\geq h$.
Consider first the case $\g\neq D_{n}$ with $n$ even where all exponents are distinct. 
In this case the $(m_i+m_j=h)$-summand of the second term of l.h.s. of \eqref{Kj0prime} contributes to the anti-diagonal entries of $B(x)$. 
For $m_i+m_j=h+1,h+2,\dots, 2h-2$ the corresponding summands contribute to the entries of $B(x)$ marked with an asterisk. 
The first term $K_{j0}'$ also contributes to $B(x)$ through the last term of r.h.s. of \eqref{Kj0'}, namely,
$$\lambda\, \sum_{u=1}^n \sum_{ v =0}^{ m_j-1}  k_{j0uv}^1 \, S_u^{(v+1)}.$$
To understand this term we look at the $(s=1)$-equations of \eqref{master2}:
\be
K_{j0}=-K_{j1}'  -  \lambda \,  
\sum_{ m_i  + m_{j} \geq h}  \, c^{m_i-m_{j}+h}_{i j 1} (-1)^{m_i+m_j-h}\,S_i^{(m_i+m_j-h)}.
\ee
The contribution is of the form \eqref{form-B}. We proceed in a similar way with analyzing $K_{j1},K_{j2},\dots,$ in order to arrive at the matrix $B(x)$ in a finite number of steps.
In the case of $D_n$ with $n$ even, the above arguments remain valid with suitable choices of $\gamma^{n/2},\,\gamma^{n/2+1}.$

We have $\det \, B=\pm \,b_1\cdots b_n \,x^n.$ Since the dimension of the space of solutions of \eqref{int1} is finite and since 
we have already obtained $n$ linearly independent solutions of \eqref{BC}, we conclude that each $b_i$ is not zero. 
Using \eqref{vani} and noting that the $(m_i+m_j=k,\,k\geq h+1)$-off diagonal entries of $B(x)$ (which are the terms marked with asterisk) have at least a $(k-(h-1))$-tuple zero, 
we conclude that
the equations
\be
\phi'=B^{-1}\,C\,\phi,\qquad \phi=(\phi_1,\dots,\phi_n)^T
\ee
have the precise normal form \eqref{normal-form}, where the expression $V_{-1}$ is due to \eqref{asym}.

The part ii) of Theorem \ref{ce-p} is proved.

Finally, iii) follows from the normal form \eqref{normal-form} automatically. The theorem is proved.
\end{proof}

The series $\phi_{a;i}$ in the above proof are called  the {\it dual essential series} of $\g$.

The final remark is about uniqueness of essential series. We fix a principal nilpotent element $I_+$. Then, for a given choice 
of the basis $\{\gamma^i\}_{i=1}^n$ and of $\{\Lambda_{m_j}\}_{j=1}^n$, the series $\{S_{i;j}\}_{i,j=1}^n$ are determined uniquely. 
With a particular choice of only $\{\Lambda_{m_j}\}_{j=1}^n$, the element $\gamma^j$ for any $j\in\{1,\dots,n\}$ and hence  
each of the series $\{S_{i;j}\}_{i=1}^n$ is unique up to a constant factor except for the $D_n$ with even $n$ case. Because of this for presenting the expression of an essential or a dual essential series, we allow a free constant factor.
For the exceptional case $D_n$ with $n$ even, $\gamma^{n/2},\,\gamma^{n/2+1}$ and hence $\{S_{i;n/2},\,S_{i;n/2+1}\}_{i=1}^n$ are unique up to invertible linear combinations with constant coefficients. 
\section{Examples}\label{examples}
In this section we study examples of low ranks. We take a faithful matrix realization $\pi$ of $\g,$ and derive the normal form \eqref{normal-form} and 
dual essential series. Denote by $\chi$ the unique constant  such that 
\be\label{chi-dfn}
(a|b)=\chi\,\Tr\, (\pi(a)\pi(b)),\qquad \forall \,a,b\in \g.
\ee
Below, we will often write $\pi(a),\,a\in\g$ just as $a,$ for simplicity.
\subsection{Simply-laced cases}\label{exp-simply}
\begin{example} [The cases $A_1,A_2$] 
These two cases have already been studied in detail in Introduction. See \eqref{top1}--\eqref{airy3}, Example \ref{A2-intro}, Examples \ref{A1-normal},\,\ref{A2-normal}.
Here we will make a short summary.
The essential series for $A_1$ has the following explicit expression 
\be
S= \sum_{g=0}^\infty \frac{(6g-1)!!}{96^g\cdot g!}\frac1{\lambda^{\frac{6g+1}{2}}},\nn
\ee
which was derived in \cite{BDY1} by an alternative method. The dual fundamental series has the expression
\be
\phi= x^{-\frac12} \, e^{\frac{x^3}{12}}. \nn
\ee

The essential series for $A_2$ are given below in equations \eqref{513}--\eqref{516}. The dual fundamental series are given explicitly via confluent hypergeometric limit functions
\bea
\&\& \phi_{1;1}= x^{-\frac23} \cdot {}_0F_1\le(;\frac13;-\frac{x^8}{1728}\ri), \nn \\
\&\& \phi_{2;2}= x^{-\frac13} \cdot {}_0F_1\le(;\frac23; -\frac{x^8}{1728}\ri). \nn
\eea
They satisfy the following dominant ODEs, respectively:
\bea
\&\& 27 \, x^2 \, \phi_{1;1}''-81 \, x \, \phi_{1;1}'+\le(x^8-84\ri)\, \phi_{1;1}=0,\nn\\
\&\& 27 \, x^2 \, \phi_{2;2}''-27 \, x \, \phi_{2;2}'+\le(x^8-21\ri)\, \phi_{2;2}=0.\nn
\eea
\end{example}

Recall that the confluent hypergeometric limit function is defined by the convergent series
$$
_0F_1(;a;z) =\sum_{k=0}^\infty \frac1{(a)_k} \frac{z^k}{k!}
$$
where
\be\label{pochh}
(a)_k=a(a+1)\dots (a+k-1)
\ee
is the Pochhammer symbol. It is also closely related to the Bessel functions
$$
J_\nu(x) =\frac{\left( \frac{z}2\right)^\nu}{\Gamma(\nu+1)}\,  {}_0F_1\left(;\nu+1;-\frac{z^2}4\right).
$$

\begin{example} [The $A_3$ case] In this case,
$\g=sl_4(\mathbb{C}),\,h=4.$ The normal form of the dual topological ODE reads as follows
\be
\phi'=\le(\begin{array}{ccc}
-\frac34  & 0  & 0 \\
0 & -\frac12   & 0  \\
0&0&-\frac14\\
\end{array}\ri)\frac{\phi}{x}+\le(\begin{array}{ccc}
-\frac3{16} \, x^4 & 0 & -\frac{15}8\,x+\frac18\, x^6 \\
0 & -\frac14\, x^4  & 0  \\
\frac18\, x^2& 0 &  0 \\
\end{array}\ri)\,\phi.\nn
\ee

The three dominant ODEs read as follows
\bea
\&\& 64 \left(x^5-15\right) x^2\, \phi_{1;1}'' -4 \left(125-3 x^5\right) x^6 \,\phi_{1;1}' - \left(x^{15}-9 x^{10}+1134 x^5-1260\right) \phi_{1;1}=0, \nn \\
\&\& 4\, x\, \phi_{2;2}'+ (x^5+2)\, \phi_{2;2}=0, \nn \\
\&\& 64 \, x^2 \, \phi_{3;3}'' - \le(64 \,x-12 \, x^6\ri) \, \phi_{3;3}' - \le(x^{10} -18\, x^5+36\ri)\, \phi_{3;3}=0. \nn 
\eea

The dual fundamental series are given explicitly by
\bea
\&\& \phi_{1;1}= x^{-\frac34} \, e^{-\frac{3 x^5}{160}} \le[\, _0F_1\left(;\frac{1}{4};\frac{x^{10}}{4096}\right)-\frac{3x^5 }{80}\, _0F_1\left(;\frac{5}{4};\frac{x^{10}}{4096}\right) \right],\nn\\
\&\& \phi_{2;2}=x^{-\frac12} \, e^{-\frac{x^5}{20}},\nn\\
\&\& \phi_{3;3}=x^{-\frac14} \,e^{-\frac{3 x^5}{160}}  \cdot \, _0F_1\left(;\frac{3}{4};\frac{x^{10}}{4096}\right).\nn
\eea
\end{example}

\begin{example} [The $A_4$ case] \label{ex-a4} In this case $h=5$ and 
$\g=sl_5(\CC).$  The normal form of the dual topological ODE of $A_4$-type reads as follows
\be \label{norm-a4}
\phi'=\le(\begin{array}{cccc}
-\frac45  & 0  & 0 &0 \\
0 & -\frac35   & 0  &0 \\
0&0&-\frac25&0\\
0&0&0&-\frac15\\
\end{array}\ri)\frac{\phi}{x}+\le(\begin{array}{cccc}
0 & -\frac{66}{25} \, x^6 & -\frac{792}5 \, x& \frac1{10} \, x^8  \\
\frac{11}{210}\, x^4 & 0  & -\frac{12}5 \, x^6  & -\frac{11}{12} \, x \\
0& \frac{7}{300} \, x^4 &  0 & 0\\
-\frac2 {35} \, x^2 & 0 & 0 & 0\\
\end{array}\ri)\,\phi.
\ee
Let $\theta=x\,\p_x;$ then the dominant ODE for $\phi_{4;4}$ reads 
\be\label{phi4-a4}
\&\& \le[3125 \, \left(x^{12}+155\right)\,\theta^4-12500\, \left(7 x^{12}+620\right)\,  \theta^3+625 \, \left(x^{24}+1277 x^{12}+54870\right)\,\theta^2 \ri.\nn\\
\&\& \le.-1250 \, \left(x^{24}+321 x^{12}+31124\right) \, \theta + \left(x^{36}-1495 x^{24}+510995 x^{12}-9215525\right)\ri]\phi_{4;4}=0.
\ee
Noting that $x=0$ is a regular singularity of this ODE and that the indicial equation at $x=0$ reads
\be\label{roots-a4}
\le(k+\frac{1}{5}\ri)\,\le(k-\frac{11}5\ri)\le(k-\frac{23}5\ri)\le(k-\frac{47}5\ri)=0,
\ee
one then recognises that the dual fundamental series $\phi_{4;4}$ is the (unique up to a constant factor) solution corresponding to the root $k=-\frac15$,
$$
\phi_{4;4}=x^{-1/5} \left( 1-\frac{161}{2^{10}\cdot 3^5} x^{12}+\frac{26\,605\,753}{2^{23}\cdot 3^{12}\cdot 5^2} x^{24}-\dots\right).
$$
In a similar way one can treat the other three dominant ODEs.

For any solution $\phi=(\phi_1,\dots,\phi_4)^T$ of the normal form \eqref{norm-a4}, $\phi_4$ satisfies \eqref{phi4-a4}, and 
the series $\phi_{a;4},\,a=1,\dots,4$ are all non-zero.
Hence, similarly to the proof of Prop.\,\ref{S-prop}, for $A_4$, one can use
{\bf one} dominant ODE to determine all solutions of \eqref{norm-a4}.
In particular,  the interesting numbers $s_{a;4},\,a=1,\dots,4$ (see eq. \eqref{sai} above) can be read off immediately from \eqref{roots-a4}:
\be
s_{1;4}=16, ~s_{2;4}=52, ~s_{3;4}=28,~s_{4;4}=4.
\ee
We expect that this phenomenon always occurs when $h$ is a prime number.
\end{example}

\begin{example}
[The case $D_4$] \label{example-d4}
We have $n=4$, $h=6$ and $m_1=1,\,m_2=3,\,m_3=3',\,m_4=5$ and
\be
\g= \{B\in {\rm Mat} (8,\mathbb{C})\,|\, B+ S\eta B^T \eta S =0\},
\ee
where $\eta_{ij}=\delta_{i+j,9},\,1\leq i,j\leq 8$ and $S={\rm diag}\,(1,-1,1,-1,-1,1,-1,1).$ In this case, $\chi=\frac12$ and
\be
\Lambda=\left(
\begin{array}{cccccccc}
 0 & 1 & 0 & 0 & 0 & 0 & 0 & 0 \\
 0 & 0 & 1 & 0 & 0 & 0 & 0 & 0 \\
 0 & 0 & 0 & 1 & \frac{1}{2} & 0 & 0 & 0 \\
 0 & 0 & 0 & 0 & 0 & \frac{1}{2} & 0 & 0 \\
 0 & 0 & 0 & 0 & 0 & 1 & 0 & 0 \\
 0 & 0 & 0 & 0 & 0 & 0 & 1 & 0 \\
 \frac{\lambda }{2} & 0 & 0 & 0 & 0 & 0 & 0 & 1 \\
 0 & \frac{\lambda }{2} & 0 & 0 & 0 & 0 & 0 & 0 \\
\end{array}
\right).
\ee
The matrices $\Lambda_1,\,\Lambda_3,\,\Lambda_{3'},\,\Lambda_5$ can be chosen as follows
\be
\Lambda_1=\Lambda,\quad \Lambda_5=2\,\Lambda^5,\quad \Lambda_3=\Lambda^3,\quad \Lambda_{3'}=2\left(
\begin{array}{cccccccc}
 0 & 0 & 0 & 0 & 1 & 0 & 0 & 0 \\
 0 & 0 & 0 & 0 & 0 & 1 & 0 & 0 \\
 0 & 0 & 0 & 0 & 0 & 0 & 1 & 0 \\
 0 & 0 & 0 & 0 & 0 & 0 & 0 & 1 \\
 \lambda & 0 & 0 & 0 & 0 & 0 & 0 & 0 \\
 0 & \lambda & 0 & 0 & 0 & 0 & 0 & 0 \\
 0 & 0 & \l & 0 & 0 & 0 & 0 & 0 \\
 0 & 0 & 0 & \l & 0 & 0 & 0 & 0 \\
\end{array}
\right).
\ee
The normal form of the dual topological ODEs of the $D_4$-type reads as follows
\be \label{norm-d4}
\phi'=\le(\begin{array}{cccc}
-\frac56  & 0  & 0 &0 \\
0 & -\frac12   & 0  &0 \\
0&0&        -\frac12        &0\\
0&0&0&-\frac16\\
\end{array}\ri)\frac{\phi}{x}+\le(\begin{array}{cccc}
\frac{26}{27}\,x^6 & 0 & 0 & \frac{91}9\,x^3+\frac16 \, x^{10}  \\
0 & x^6  & 0   & 0 \\
0  & 0  &  x^6 & 0\\
\frac29\, x^2 & 0 & 0 & 0\\
\end{array}\ri)\,\phi.
\ee
The dual fundamental series are given by
\bea
\&\& \phi_{1;1} =x^{-\frac56}\, 
\exp\le(\frac{13 \, x^7}{189}\ri) \cdot \le[ 
\,_0F_1\left(;\frac{1}{6};\frac{x^{14}}{3^{6}}\right) 
+\frac{13} {63}\,x^{7}  \, _0F_1\left(;\frac{7}{6};\frac{x^{14}}{3^{6}}\right)\ri],\nn\\
\&\& \phi_{2;2} =\phi_{3;3}=x^{-\frac12}\, \exp \le(\frac{x^7}7\ri),\nn\\
\&\& \phi_{4;4} =  x^{-\frac16}\, \exp\le(\frac{13 \, x^7}{189}\ri) \cdot \, _0F_1\left(;\frac{5}{6};\frac{x^{14}}{3^{6}}\right).\nn
\eea
They are solutions to the following dominant ODEs, respectively:
\bea
\&\& 108 \left(3 x^7+182\right) \, x^2 \, \phi_{1;1}''-4 \left(78 \, x^{14}+5461 \, x^7+9828\right) \, x  \, \phi_{1;1}' \nn\\
\&\& \qquad\qquad\qquad-\left(12 \, x^{21}+260 \, x^{14}+107029 \, x^7+62790\right) \phi_{1;1}=0,\nn\\
\&\& \phi_{2;2}'= -\frac{\phi_{2;2}}{ 2\, x}+ x^6 \, \phi_{2;2},\qquad \phi_{3;3}'= -\frac{\phi_{3;3}}{ 2\, x}+ x^6 \, \phi_{3;3},\nn\\
\&\& 108 \, x^2 \, \phi_{4;4}''-\left(104\, x^8+108 \,x\right) \phi_{4;4}'-\left(4\, x^{14}+260 \,x^7+39\right) \phi_{4;4}=0.\nn
\eea
\end{example}

\subsection{Non-simply-laced cases}\label{exp-non-sim}

\begin{example} [The case $B_2$]
We use the realization given in \cite{DS}:
$
\g= \{B\in {\rm Mat} (5,\mathbb{C})\,|\, B+ S\eta B \eta S =0\}, \chi=\frac12
$
where $\eta_{ij}=\delta_{i+j,6}\,(1\leq i,j\leq 5),$ $S={\rm diag}\,(1,-1,1,-1,1).$ 
We have
\be
 \Lambda=\left(
\begin{array}{ccccc}
 0 & 1 & 0 & 0 & 0 \\
 0 & 0 & 1 & 0 & 0 \\
 0 & 0 & 0 & 1 & 0 \\
 \frac{\lambda}{2} & 0 & 0 & 0 & 1 \\
 0 & \frac{\lambda}{2} & 0 & 0 & 0 \\
\end{array}
\right).\nn
\ee
Recall that in this case $h=4,\,m_1=1,\,m_2=3$, and we can choose $\Lambda_{m_i}$ as follows
\be
\Lambda_1=\Lambda,\quad \Lambda_3=2\,\Lambda^3.
\ee
The normal form of the dual topological ODE is given by
\be
\phi'=\left( \begin{array}{cc} -\frac34 & 0\cr 0 & -\frac14 \end{array}\right)\frac{\phi}{x}+\left(\begin{array}{cc}\frac34 \,x^4 & \frac{15}{16} \,x+\frac{1}{4}\, x^6  \cr x^2 & 0\end{array}\right) \phi.
\ee
Solving this ODE system we obtain the dual fundamental series
\bea
\&\& \phi_{1;1}=x^{-\frac34}\, e^{\frac{3 x^5}{40}} \left[   \, _0F_1\left(;\frac{1}{4};\frac{x^{10}}{2^8}\right)+\frac{3x^5}{20}  \, _0F_1\left(;\frac{5}{4};\frac{x^{10}}{2^8}\right)\right],\nn\\
\&\& \phi_{2;2}=x^{-\frac14}\, e^{\frac{3 x^5}{40}} \, _0F_1\left(;\frac{3}{4};\frac{x^{10}}{2^8}\right).\nn
\eea
They are solutions to the following dominant ODEs
\bea
\&\& \left(64 \, x^5+240 \right) x^2\, \phi_{1;1}''-\left(48 \, x^{5}+500 \right) x^6\, \phi_{1;1}' - \left(16 \, x^{15}+36\, x^{10}+1134\, x^5+315\right) \phi_{1;1}=0,\nn\\
\&\& 16 \,x^2 \, \phi_{2;2}''-4 \left(3 \, x^5+4\right) \,x \,\phi_{2;2}'-\left(4 \, x^{10}+18 \, x^5+9\right)\, \phi_{2;2}=0.\nn
\eea
\end{example}

\begin{example} [The $B_3$ case] 
We have 
$
h=6,\,\g= \{B\in {\rm Mat} (7,\mathbb{C})\,|\, B+ S\eta B \eta S =0\},\,\chi=\frac12
$
where $\eta_{ij}=\delta_{i+j,8},\,1\leq i,j\leq 7$ and $S={\rm diag}\,(1,-1,1,-1,1,-1,1).$ The cyclic element is 
\be
\Lambda=\left(
\begin{array}{ccccccc}
 0 & 1 & 0 & 0 & 0 & 0 & 0 \\
 0 & 0 & 1 & 0 & 0 & 0 & 0 \\
 0 & 0 & 0 & 1 & 0 & 0 & 0 \\
 0 & 0 & 0 & 0 & 1 & 0 & 0 \\
 0 & 0 & 0 & 0 & 0 & 1 & 0 \\
 \frac{\lambda}{2} & 0 & 0 & 0 & 0 & 0 & 1 \\
 0 & \frac{\lambda}{2} & 0 & 0 & 0 & 0 & 0 \\
\end{array}
\right). \nn
\ee
We choose $\Lambda_{m_i}$ as follows
$
\Lambda_1=\Lambda,\quad \Lambda_5=2\,\Lambda^5,\quad \Lambda_3=\sqrt{2}\,\Lambda^3.
$
The normal form of the dual topological ODEs read
\be
\phi'=\le(\begin{array}{ccc}
-\frac56  & 0  & 0 \\
0 & -\frac12   & 0  \\
0&0&-\frac16\\
\end{array}\ri)\frac{\phi}{x}+\le(\begin{array}{ccc}
\frac{26}{27}\,x^6 & 0 & \frac{91}9\,x^3+\frac16\, x^{10} \\
0 & x^6  & 0  \\
\frac29\, x^2& 0 &  0 \\
\end{array}\ri)\,\phi,\nn
\ee
which is obviously equivalent to the normal form of $D_4$ restricted to $\phi_3\equiv0.$ Hence
\bea
\&\& \phi_{1;1} =x^{-\frac56}\, 
\exp\le(\frac{13 \, x^7}{189}\ri) \cdot \le[ 
\,_0F_1\left(;\frac{1}{6};\frac{x^{14}}{3^{6}}\right) 
+\frac{13} {63}\,x^{7}  \, _0F_1\left(;\frac{7}{6};\frac{x^{14}}{3^{6}}\right)\ri],\nn\\
\&\& \phi_{2;2} =x^{-\frac12}\, \exp \le(\frac{x^7}7\ri),\nn\\
\&\& \phi_{3;3} =  x^{-\frac16}\, \exp\le(\frac{13 \, x^7}{189}\ri) \cdot \, _0F_1\left(;\frac{5}{6};\frac{x^{14}}{3^{6}}\right).\nn
\eea
\end{example}

\begin{example}[The $G_2$ case]
We have $h=6,\,m_1=1,\,m_2=5.$ 
Using the matrix realization given in \cite{BFRFW}, we obtain 
 the normal form of the dual topological ODE of $G_2$-type
\be
\phi'=\left( \begin{array}{cc} -\frac56 & 0\cr 0 & -\frac16 \end{array}\right)\frac{\phi}{x}
+\left(\begin{array}{cc}\frac{13}{54} \,x^6 & \frac{1820}{9} \,x^3+\frac{5}{6}\, x^{10}  \cr \frac1{360} \, x^2 & 0\end{array}\right) \phi.
\ee
The dual fundamental series have the following explicit expressions
\bea
\&\& \phi_{1;1} =x^{-\frac56}\,e^{\frac{13 x^7}{756}}\,\le[\, _0F_1\left(;\frac{1}{6};\frac{x^{14}}{2^4\cdot 3^6}\ri)+
\frac{13x^7}{252}   \, _0F_1\left(;\frac{7}{6};\frac{x^{14}}{2^4\cdot 3^6}\right)\ri],\nn\\
\&\& \phi_{2;2} =  x^{-\frac16}\, e^{\frac{13 \, x^7}{756}} \cdot \, _0F_1\left(;\frac{5}{6};\frac{x^{14}}{2^4\cdot3^{6}}\right).\nn
\eea
We remark that the normal form of $G_2$-type coincides with the normal form of $B_3$ restricted to $\phi_2\equiv0$ after a rescaling of  $\Lambda$ by
$\Lambda\mapsto 2^{-\frac13}\,\Lambda.$
\end{example}

Let us summarise in Table\,\ref{nums} the numbers $m_a,r_a$ for the above examples.
\begin{table}[h]
\begin{center}
    \begin{tabular}{| c | c | c | c | }
    \hline
    $\g$ & $h$ & $m_a$                                 & $r_a$ \\ \hline
    $A_1$ & $2$ & $1$		                                & $1$\\ \hline
    $A_2$ & $3$ & $1,2$                 			       & $2,2$ \\ \hline
    $A_3$ & $4$ & $1,2,3$         			             & $2,1,2$ \\ \hline
    $A_4$ & $5$ & $1,2,3,4$     			    & $4,4,4,4$\\ \hline
    $D_4$ & $6$ & $1,3,3,5$     		          & $2,1,1,2$\\ \hline
    $B_2$ & $4$ & $1,3$              		                 & $2,2$\\  \hline
    $B_3$ & $6$ & $1,3,5$         		                      & $2,1,2$\\ \hline
    $G_2$ & $6$ & $1,5$  & $2,2$\\ \hline
    \end{tabular}
\end{center}
\caption{Simple Lie algebras and intrinsic numbers.} \label{nums}
\end{table}

\section{Concluding remarks} \label{rem}
\subsection{On the analytic properties of solutions to the topological ODEs}

We have proved the dimension of the space of solutions regular at $\lambda=\infty$ of the topological equation of $\g$-type
\be
[L,M]=0,\qquad L=\p_\lambda+\Lambda
\ee
is equal to the rank of $\g.$ We have also proved Thm.\,\ref{ce-thm} on reduction of this equation through the lowest weight gauge.

One can also consider the vector space of all solutions to the topological ODE; denote it by $\SS(\g)$.  As $\lambda=0$ is a regular point for the system \eqref{int1}, every solution $M(\lambda)\in \SS(\g)$ is uniquely determined by the initial data $M(0)$. Thus $\dim_\CC \, \SS(\g)=\dim_\CC \, \g=n(h+1).$

Denote $M^{ak}$ the unique analytic solution of \eqref{int1} determined by the following initial data
\be \label{Mak}
M^{ak}(0)=\ad_{I_+}^k \,\gamma^a,\qquad a=1,\dots,n,\,k=0,\dots,2m_a.
\ee
Clearly $M^{ak}(\lambda)$ form a basis of $\SS(\g).$ With a particular choice of basis $\{M_1,\dots,M_n\}$\footnote{Such a choice depends on a choice of an appropriate sector in the complex plane near $\lambda=\infty$.} 
of $\SS_\infty^\reg(\g)$, there exist unique {\it partial connection numbers} $C_{abk}$ such that
\be  \label{C}
M_a=\sum_{b=1}^n\,\sum_{k=0}^{2m_b}\, C_{abk}\,M^{bk}
\ee
where $C_{abk}$ are constants. 
We will study these connection numbers for the topological ODEs as well as the monodromy data for the dual topological ODEs in a subsequent publication.

\subsection{Application of 
topological ODEs
to computation of 
 intersection numbers on the moduli spaces of $r$-spin structures}

First, it should be noted that all main results of the previous sections remain valid after a change of normalization of the topological equation. It will be convenient to introduce a normalization factor $\kappa$ in the following way
\be \label{LM-Q}
[L,M]=0,\qquad L=\p_\lambda+\kappa\,\Lambda.
\ee
This parameter will be useful for applications of solutions to \eqref{LM-Q} to computation of certain topological invariants of Deligne--Mumford moduli spaces. In particular, here we explain two applications of the topological ODEs \eqref{LM-Q}. The proofs follow the scheme used in \cite{BDY1} for computing the coefficients of Taylor expansion of the logarithm  of the Witten--Kontsevich tau-function of the KdV hierarchy. Details of the proofs can be found in an upcoming publication \cite{BDY2}.

Let $\overline{\mathcal{M}}_{g,N}$ denote the Deligne--Mumford moduli space of stable curves $C$ of genus $g$ with $N$ marked points $x_1$, \dots, $x_N$, $\mathcal{L}_k$ the $k^{th}$-tautological line bundle over $\overline{\mathcal{M}}_{g,N}$, $\psi_k=c_1\left(\mathcal{L}_k \right)\in H^2\left( \overline{\mathcal{M}}_{g,N}\right)$, $k=1,\dots,N.$ 

Let $r\geq 2$ be an integer.
Fix a set of integers $\{a_k\}_{k=1}^N \subset \{0,\dots,r-1\}$  such that $2g-2-\sum_{k=1}^N a_k$ is divisible by $r$. Then the degree of the line bundle  $K_C-\sum_{i=1}^N a_i x_i$ is divisible by $r$. Here $K_C$ is the canonical class of the curve $C$. Hence there exists a line bundle ${\mathcal T}$ such that
\be
{\mathcal T}^{\otimes r}\simeq K_C-\sum_{i=1}^N a_i x_i.
\ee
Such a line bundle is not unique; there are $r^{2g}$  choices of ${\mathcal T}$.
A choice of such a line bundle ${\mathcal T}$ is called an {\it $r$-spin structure} on the curve $(C, x_1, \dots, x_N)\in\overline{\mathcal M}_{g,N}$. The space of all $r$-spin structures admits a natural compactification $\overline{\mathcal M}^{1/r}_{g; a_1, \dots, a_N}$ along with a forgetful map
$$
p:\overline{\mathcal M}^{1/r}_{g; a_1, \dots, a_N}\to \overline{\mathcal M}_{g,N}.
$$ 
There is a natural complex vector bundle ${\mathcal V}\to \overline{\mathcal M}^{1/r}_{g; a_1, \dots, a_N}$
such that the fiber over the generic point $(C, x_1, \dots, x_N, {\mathcal T})\in \overline{\mathcal M}^{1/r}_{g; a_1, \dots, a_N} $  coincides with $H^1(C, {\mathcal T})$. Here the genericity assumption means that $H^0(C,{\mathcal T})=0$. Under this assumption, using the Riemann--Roch theorem one concludes that ${\rm rank}\, {\mathcal V}=s+g-1$ where
$$
s=\frac{a_1+\dots+a_N-(2g-2)}{r}.
$$
The construction of the vector bundle ${\mathcal V}$ over non-generic points in $\overline{\mathcal M}^{1/r}_{g; a_1, \dots, a_N}$ is more subtle; for details see \cite{FSZ} and references therein.
The {\it Witten class}
$c_W(a_1,\dots,a_N)\in H^{2(s+g-1)}\left( \overline{\mathcal{M}}_{g,N}\right)$ is defined via the push-forward of the Euler class of the dual vector bundle ${\mathcal V}^\vee$
\be
c_W(a_1,\dots,a_N)=\frac1{r^g} p_*\left( {\rm euler}\left( {\mathcal V}^\vee\right)\right).
\ee
See more details on properties of $c_W(a_1,\dots,a_N)$ in \cite{W2, W3, FSZ, JKV, PV}.

Witten's $r$-spin intersection numbers are nonnegative rational numbers defined by
\bea \label{def-r-spin}
\langle\tau_{i_1,k_1}\cdots\tau_{i_N,k_N}\rangle^{r-spin}:=\int_{\overline{\mathcal{M}}_{g,N}} \psi_1^{k_1}\wedge\cdots \wedge\psi_{N}^{k_N} \wedge c_W (i_1-1,\dots,i_N-1).
\eea
Here $i_\ell \in \{1,\dots, r-1\}, \,k_\ell\geq 0,~ \ell=1,\dots, N.$ Note that we have assumed that the genus in l.h.s. of \eqref{def-r-spin} is reconstructed by the dimension counting: 
\bea \label{dim-a}
\frac{i_1-1}{r}+\cdots+\frac{i_N-1}r+\frac{r-2}{r}(g-1)+k_1+\cdots+k_N=3g-3+N. 
\eea
See  \cite{BH, BH2, LX, LVX, JKV, Zhou1,BYZ, DLM} for some explicit calculations of Witten's $r$-spin intersection numbers.

According to Witten's $r$-spin conjecture \cite{W2} the intersection numbers \eqref{def-r-spin} are coefficients of Taylor expansion of logarithm of tau-function of a particular solution to the Drinfeld--Sokolov integrable hierarchy of the $A_n$ type, $n=r-1$. This conjecture was
proved by Faber--Shadrin--Zvonkine \cite{FSZ}.

We will now use the constructed above generalized Airy resolvents for a simple algorithm for computing the $r$-spin intersection numbers in all genera.

For a given {$N\geq 1$} and a given collection of indices $i_1$, \dots, $i_N$ satisfying $1\leq i_\ell\leq r-1$ define 
the following generating function of the $N$-point Witten's $r$-spin intersection numbers by
\be\label{r-genera}F^{r-spin}_{i_1,\dots,i_N}(\lambda_1,\dots,\lambda_N)
= \left(-\kappa\,\sqrt{-r}\right)^N\sum_{k_1,\dots,k_N\geq 0}  (-1)^{k_1+\dots+k_N}  \prod_{\ell=1}^N \frac{   \left(\frac{i_\ell}{r}\right)_{k_\ell+1}}{ (\kappa\, \lambda_\ell)^{\frac{{i_\ell}}r+k_\ell+1}} \langle\tau_{i_1 , k_1} \dots \tau_{i_N , k_N}\rangle^{r-spin}.
\ee
Here $\lambda_1$, \dots, $\lambda_N$ are indeterminates, $\kappa=\le(\sqrt{-r}\ri)^{-r}$. We use the standard notation \eqref{pochh} for the Pochhammer symbol.


\bt 
\label{r-spin-thm}
Let $n=r-1,\, \g= sl_{n+1} (\CC)$,  $\Lambda=\sum_{i=1}^n E_{i,i+1}+\lambda\, E_{n+1,1}.$ 
Let $M_i=M_i(\lambda)$ be the basis of generalized Airy resolvents of $\g$-type, uniquely determined by the topological ODE 
\be
M'= \kappa \, [M,\Lambda],\qquad \kappa=\le(\sqrt{-r}\ri)^{-r}
\ee
normalized by
\be
M_i=-\lambda^{-\frac{i}h} \, \Lambda_{i}+ \mbox{lower degree terms w.r.t. } \deg,\qquad \Lambda_i:=\Lambda^i.
\ee
Denote $\eta_{ij}=\delta_{i+j,n+1}.$ {Then the generating functions \eqref{r-genera} of $N$-point correlators of $r$-spin intersection numbers have the following expressions}
\bea
&& \frac{\d F_i^{r-spin}}{\d \lambda}(\lambda)=- \kappa \, (M_i)_{1,n+1}(\lambda)- \kappa\,\lambda^{-\frac{r-1}r}\, \delta_{i,n}, \label{easy-1}\\
&& F_{i_1,\dots,i_N}^{r-spin}(\lambda_1,\dots,\lambda_N) = - \frac {1} N \sum_{s \in S_N} 
\frac {\Tr \le(M_{i_{s_1}}(\lambda_{s_1})  \dots  M_{i_{s_N}}(\lambda_{s_N}) \ri) }{\prod _{j=1}^{N} (\lambda_{s_j}- \lambda_{s_{j+1}})}\nn\\
&& \qquad \qquad\qquad\qquad\qquad\qquad\qquad -\delta_{N2} \, \eta_{i_1 i_2}  \frac {\lambda_1^{-\frac {{i_1}}h } \lambda_2^{- \frac {{i_2}}h} ({i_1} \,  \lambda_1 + {i_2} \, \lambda_2)}{(\lambda_1-\lambda_2)^2},\quad N\geq 2.
\eea
\et

{Note that up to a constant factor the entry $(M_i)_{1,n+1}$ is nothing but the essential series $S_{i;n}.$ Then, from \eqref{easy-1} we expect that all one-point $r$-spin intersection numbers can be read off from coefficients of solutions to the $n^{th}$ dominant ODE expanded near the regular singularity $x=0$.
We also remark that alternative closed expressions for one-point $r$-spin intersection numbers have been obtained in \cite{LVX} by using the Gelfand--Dickey pseudo-differential operators. }

\begin{example}
[$r=2$] Witten's $2$-spin invariants coincide with intersection numbers of $\psi$-classes over $\overline{\mathcal{M}}_{g,N}$ \cite{W1,Kontsevich,FSZ}.
So Thm.\,\ref{r-spin-thm} in the choice $r=2$
recovers the result of \cite{BDY1,Zhou2}: for $N\geq 2,$
\be\label{KdV-N-point}
&& \!\!\!\!\!\!\! {\sum_{g=0}^\infty} \sum_{p_1,\dots,p_N\geq 0} (2p_1+1)!!\cdots (2p_N+1)!!\, \int_{\overline{\mathcal{M}}_{g,N}} \psi_1^{p_1}\cdots \psi_N^{p_N}\,  \lambda_1^{-\frac{2p_1+3}{2}}\cdots\lambda_N^{-\frac{2p_N+3}2}\nn\\ 
&& \!\!\!\!\!\!\!  \qquad\qquad = - \frac{1}{N} \sum_{r\in S_{N}}  \frac{\Tr \left(M(\lambda_{r_1})\cdots 
M(\lambda_{r_N})\right)}{\prod_{j=1}^{N}(\lambda_{r_j}-\lambda_{r_{j+1}})} - \delta_{N2}\frac{\lambda_1^{-\frac12} \lambda_2^{-\frac12}(\lambda_1+\lambda_2)}{(\lambda_1-\lambda_2)^2},
\ee
where $
M=-\frac{\lambda^{-\frac12}}{2}\left(
\begin{array}{cc}
\sum_{g=1}^\infty \frac{(6g-5)!!}{(96\,\kappa^2)^{g-1}\cdot (g-1)!} \lambda^{-3g+2} & 2 \sum_{g=0}^\infty \frac{(6g-1)!!}{(96\,\kappa^2)^g\cdot g!} \lambda^{-3g}\\
\\
-2 \sum_{g=0}^\infty\frac{6g+1}{6g-1} \frac{(6g-1)!!}{(96\,\kappa^2)^g\cdot g!} \lambda^{-3g+1} &  -\sum_{g=1}^\infty \frac{(6g-5)!!}{(96\,\kappa^2)^{g-1}\cdot (g-1)!} 
\lambda^{-3g+2}\\
\end{array}
\right),~\kappa=(-2)^{-1}.
$
The formula \eqref{KdV-N-point} is closely related to a new formula recently proved by Jian Zhou \cite{Zhou3} (See Thm.\,6.1 therein). {For $N=1$, it follows easily from \eqref{easy-1} the well-known formula 
$$\langle\tau_{3g-2}\rangle_g=\frac{1}{24^g\cdot g!}\quad \mbox{for}\quad g\geq 1.$$}
\end{example}

\begin{example}
[$r=3$] Define
$
M=\left(\begin{array}{ccc} 2 K_{11} +K_{22} & - 2 S_1 -3 K_{23} & 6 S_2\\
2 K_{10}+K_{21} & -2 K_{22} & 3 K_{23}-2 S_1\\
K_{20} & 2 K_{10} -K_{21} & -2 K_{11} +K_{22}\end{array}\right),
$
where
\bea
\&\&K_{10}= \frac{S_1''}{\kappa^2}+3 \lambda  S_2, \quad K_{11}= -\frac{S_1'}{\kappa}, \nn \\
\&\&K_{20}= \frac{S_2^{(4)}}{\kappa^4}-2\lambda\,  S_1,\quad K_{21}= -\frac{S_2'''}{\kappa^3}, \quad
K_{22}= \frac{S_2''}{\kappa^2},\quad
K_{23}= -\frac{S_2'}{\kappa}. \nn
\eea 
The generalized Airy resolvents $M_1(\lambda)$ and $M_2(\lambda)$ are obtained by replacing the functions $S_1$ and $S_2$ in the above expressions by
\bea
S_{1;1}&=& \frac{1}{2^2 3^3 \kappa^2 }  \sum_{g=0}^\infty  \frac{ (-1)^{g} \cdot 3^{7g}\, \Gamma(8g+\frac13)}{24^g \cdot g! \cdot (54 \kappa^2)^{3g-1}\, \Gamma(g+\frac13)} \,  \lambda^{-\frac{24g+1}{3}},\label{513} \\
S_{1;2}&=& - \frac{1}{2^3 3^2 \kappa^2}\,\sum_{g=0}^\infty  \frac{(-1)^g  \cdot 3^{7g} \, \Gamma(8g+\frac{10}3)}{24^g\cdot g! \cdot (54\,\kappa^2)^{3g} \, \Gamma(g+\frac43)} \, \lambda^{-\frac{24g+10}{3}}
\eea
and
\bea
S_{2;1}&=& - \frac{3^4}{2^3} \,\sum_{g=0}^\infty \frac{(-1)^g\cdot 3^{7g} \, \Gamma(8g+\frac{17}3)}{24^g\cdot g! \cdot \left(54 \, \kappa^2\right)^{3 g +2} \Gamma( g+\frac53)} \lambda^{-\frac{24g+17}{3}},\\
S_{2;2}&=& - \frac{1}6\,\sum_{g=0}^\infty  \frac{ (-1)^g\cdot 3^{7g} \, \Gamma(8g+\frac23)}{24^g\cdot g! \cdot (54\,\kappa^2)^{3g}  \cdot  \Gamma(g+\frac23)} \lambda^{-\frac{24g+2}{3}},\label{516}
\eea
respectively.
Setting $\kappa=\le(\sqrt{-r}\ri)^{-r}$ we easily obtain from Theorem \ref{r-spin-thm} that 
\bea
\&\& \langle\tau_{1,8g-7}\rangle_g=\frac{1296}{46656^g \cdot (g-1)! \cdot \le(\frac13\ri)_g }\quad (g\geq 1), \nn\\
\&\&  \langle\tau_{2,8g-2}\rangle_g=\frac{1}{46656^g \cdot g!\cdot \le(\frac23\ri)_g} \quad (g\geq 1), \nn\\
\&\& F_{i_1,\dots,i_N}^{3-spin}(\lambda_1,\dots,\lambda_N) = - \frac {1} N \sum_{s \in S_N} 
\frac {\Tr\le(M_{i_{s_1}}(\lambda_{s_1})  \dots  M_{i_{s_N}}(\lambda_{s_N}) \ri) }{\prod _{j=1}^{N} (\lambda_{s_j}- \lambda_{s_{j+1}})} \nn\\
\&\& \qquad\qquad\qquad\qquad-\delta_{N2} \, \eta_{i_1 i_2}  \frac {\lambda_1^{-\frac {{i_1}}h } \lambda_2^{- \frac {{i_2}}h} ({i_1} \,  \lambda_1 + {i_2} \, \lambda_2)}{(\lambda_1-\lambda_2)^2},\quad N\geq 2. \nn
\eea
The above formulae of one-point $3$-spin intersection numbers agree with known formulae in \cite{Shadrin, BH2, LX, LVX}; e.g. the first several of these numbers are given by
$$\langle\tau_{1,1}\rangle_1=\frac1{12},\quad \langle\tau_{2,6}\rangle_1=\frac1{31104},\quad \langle\tau_{1,9}\rangle_2=\frac{1}{746496}, 
\quad  \langle\tau_{2,14}\rangle_2=\frac{1}{4837294080}.$$
\end{example}

\subsection{Topological ODEs, Drinfeld--Sokolov hierarchies, and FJRW ``quantum singularity theory"}

According to \cite{FJR,LRZ}, the partition function of FJRW invariants for an $A\,D\,E,\,D^T_n$ singularity with the maximal diagonal symmetry group is a tau-function of its 
mirror Drinfeld--Sokolov hierarchy; the partition function of FJRW invariants for a $(D_{2k},\langle J\rangle)$ singularity is a 
particular tau-function of the $D_{2k}$-Drinfeld--Sokolov hierarchy; the partition function of non-simply-laced analogue of the FJRW intersection 
numbers is defined as a tau-function of the Drinfeld--Sokolov hierarchies of $B\, C\, F\, G$-type \cite{LRZ}.

\bd
The Drinfeld--Sokolov partition function $Z$ is a tau-function of the Drinfeld--Sokolov hierarchy of $\g$-type \cite{DS} uniquely specified by the following string equation:
\be\label{string}
\sum_{i=1}^n\,\sum_{k=0}^\infty t^i_{k+1} \frac{\p Z}{\p t^i_k}+\frac12 \sum_{i,j=1}^n\, \eta_{ij} \,t^{i}_0t^j_0 Z = \frac{\p Z}{\p t^1_0}.
\ee
Here, $t^i_k$ are time variables of the Drinfeld--Sokolov hierarchy and $\eta_{ij}=\delta_{i+j,n+1}.$ 
\ed

It should be noted that the terminology ``the Drinfeld--Sokolov hierarchy of $\g$-type'' that we use refers to the Drinfeld--Sokolov hierarchy (under the choice of a principal nilpotent element) associated to the non-twisted affine Lie algebra $\hat\g^{(1)}.$ Since the construction of this integrable hierarchy does not depend on the central extension of the loop algebra, it is essentially associated with the simple Lie algebra.

Define the following generating functions of $N$-point correlators of $Z$ by
\be\label{F-DS}
F_{i_1,\dots,i_N}(\lambda_1,\dots,\lambda_N)= (-\kappa\,\sqrt{-h})^N
\sum_{k_1,\dots,k_N\geq 0}^\infty(-1)^{k_1+\dots+k_N}\prod_{\ell=1}^N 
\frac{\left( \frac{m_{i_\ell}}{h}\right)_{k_\ell+1}}{\left( \kappa\, \lambda_\ell\right)^{\frac{m_{i_\ell}}{h}+k_\ell+1}} \frac{\p^N \log Z}{\p t^{i_1}_{k_1}\dots \p t^{i_N}_{k_N}}(0).
\ee
We will now express these generating functions in terms of the generalized Airy resolvents of $\g$-type. To this end we will need to use the following multilinear forms on the Lie algebra
\be
B(a_1,\dots,a_N)=\tr \, (\ad_{a_1}\circ \dots \circ \ad_{a_N}),\qquad \forall\,a_1,\dots,a_N\in \g.
\ee

\bt \label{N-DS-p-real} Let $\g$ be a simple Lie algebra of rank $n$. Let $M_i=M_i(\lambda)$, $i=1, \dots, n$ be the generalized Airy resolvents of $\g$-type, which are the unique solutions to
\be
M'= \kappa \, [M,\Lambda],\qquad \kappa=\le(\sqrt{-h}\ri)^{-h}
\ee
subjected to 
\be
M_i=-\lambda^{-\frac{m_i}h} \, \Lambda_{m_i}+ \mbox{lower degree terms w.r.t. } \deg.
\ee
{Then the generating functions \eqref{F-DS} for the $N$-point correlators of the Drinfeld--Sokolov partition function associated to $\g$ are given by the following expressions
\bea
&& \!\!\!\!\! \frac{\d F_i}{\d \lambda}(\lambda)=- \frac{\kappa}{2\, h^\vee} B(E_{-\theta}, M_i)-\kappa\,\lambda^{-\frac{h-1}h}\, \delta_{i,n},\\
&&  \!\!\!\!\!  F_{i_1,\dots,i_N}(\lambda_1,\dots,\lambda_N) = - \frac 1 {2 N\, h^\vee} \sum_{s \in S_N} 
\frac {B\le(M_{i_{s_1}}(\lambda_{s_1}),\dots, M_{i_{s_N}}(\lambda_{s_N}) \ri) }{\prod _{j=1}^{N} (\lambda_{s_j}- \lambda_{s_{j+1}})}\nn\\
&&  \qquad\qquad\qquad\qquad\qquad\qquad -\delta_{N2} \, \eta_{i_1 i_2}  \frac {\lambda_1^{-\frac {m_{i_1}}h } \lambda_2^{- \frac {m_{i_2}}h} (m_{i_1} \,  \lambda_1 + m_{i_2} \, \lambda_2)}{(\lambda_1-\lambda_2)^2}, \quad N\geq 2. \label{24-FJR}
\eea
Here, $h^\vee$ is the dual Coxeter number.}
\et
{Similarly as before, we expect that all one-point correlators of the Drinfeld--Sokolov partition function can be read off from coefficients of solutions to the $n$-th dominant ODE expanded at $x=0$.}
\noindent The proof will be given in \cite{BDY2}. It is interesting to mention that for the $ADE$ cases, the correction term appearing in \eqref{24-FJR} for $N=2$
coincides with the propagators derived in \cite{LYZ} in the vertex algebra approach \cite{BM} to FJRW invariants; in these cases, 
the Drinfeld--Sokolov partition function coincides with the total descendant potential \cite{G1,DZ-norm,DZ1} of the corresponding Frobenius manifolds \cite{Du1,Du2}.
It would also be interesting to investigate relations between the dual fundamental series and the hypergeometric functions considered in \cite{HO}.

{In subsequent publications we will continue the study of topological ODEs as well as their applications to computation of invariants of the Deligne--Mumford moduli spaces considering generalized Drinfeld--Sokolov hierarchies \cite{GHM}.}

\begin{appendices} 

\section{Simple Lie algebras and generalized Airy functions}
Let $\g$ be an arbitrary simple Lie algebra of rank $n$. We consider in this appendix a $\g$-generalization of equation \eqref{Airy-vector}. 

Let $E_{-\theta}$ and $I_+$ be defined as in the Introduction. Let $\Lambda:=I_++\lambda\,E_{-\theta}$.
Throughout this appendix we will fix a matrix realization $\pi:\,\g\rightarrow {\rm Mat}\,(m,\CC)$. Again for any $A\in\g$ we simply write $\pi(A)$ as $A.$
\bd
The system of $m$ linear ODEs
\be \label{gen-airy}
\frac{d \vec{y}}{d\lambda}+\Lambda\,\vec{y}=0,\qquad \vec{y}=(y_1,\dots,y_m)^T\in \CC^m,\,\lambda\in \CC
\ee
is called the generalized Airy system of $\g$-type.
\ed
\noindent The dimension of the space of solutions to \eqref{gen-airy} is $m$. Let $Y:\mathbb{C}\rightarrow {\rm Mat}\,(m,\CC)$ be a fundamental solution matrix of 
\eqref{gen-airy}, i.e. 
\be
Y'+\Lambda\,Y=0,\qquad \det\, (Y)=const\neq 0.
\ee

\bp \label{A}
Let $A$ be any constant matrix in $\pi(\g)$. The matrix-valued function $M:=YAY^{-1}$ is a solution of the topological ODE, i.e. we have
\be
M'=[M,\Lambda].
\ee
\ep
\begin{proof}
By straightforward calculations.
\end{proof}

Noting that $\lambda=0$ is a regular point of the generalized Airy system \eqref{gen-airy}, 
we define a particular fundamental solution matrix $Y$ of \eqref{gen-airy} by using the following initial data
\be
Y(0)=I_m
\ee
where $I_m$ denote the $m\times m$ identity matrix. Then we have
\bp\label{A3}
Let $\{M^{ak}\}_{a=1,\dots,n,\,k=0,\dots,2m_a}$ denote the basis of $\SS(\g)$ defined by \eqref{Mak}.  Then 
\be
\pi(M^{ak})=Y\, \pi(\ad_{I_+}^k \gamma^a)\, Y^{-1}.
\ee
\ep
\begin{proof}
By using Prop.\,\ref{A} and by using the standard uniqueness theorem for linear ODEs.
\end{proof}
\bc \label{A4}
Let $\{M_a\}_{a=1,\dots,n}$ be a chosen (analytic) basis of $\SS_\infty^{\reg}(\g)$. Then we have
\be
\pi(M_a)=\sum_{b=1}^n \sum_{k=0}^{2m_b} \, C_{abk} \, Y\, \pi(\ad_{I_+}^k \gamma^a)\, Y^{-1}
\ee
where $C_{abk}$ are the corresponding partial connection numbers (c.f. \eqref{C}).
\ec

\begin{example} [$A_n,\,n\geq 1$] \label{exA5}  Take $\g=sl_{n+1}(\CC).$ We have $m=n+1.$ Any solution $\vec{y}=(y_1=y,\,y_2,\dots,\,y_m)^T$ of \eqref{gen-airy} satisfies that
\bea
\&\& (-1)^{n+1}\, y^{(n+1)}=\lambda\,y, \label{Pearcey}\\
\&\& y_{k}=(-1)^{k-1}\,y^{(k-1)},~k=2,\dots,m.
\eea
Solutions to \eqref{Pearcey} can be represented by a Pearcey-type integral
\be
y(\lambda)=\int_C \exp\le(\frac{x^{n+2}}{n+2} -\lambda\,x\ri) \, dx
\ee
where $C$ is a suitable contour on the complex $x$-plane. 
\end{example}

\begin{example}[$D_n,\,n\geq 4$]  \label{exA6} Take the matrix realization of $\g$ as in \cite{DS}. We have $m=2n.$ The cyclic element $\Lambda$ takes the form
\be 
\Lambda=\left(
\begin{array}{cccccccc}
 0 & 1 &  & &  & &  &  \\
   &  &  \ddots &  &  &&&\\
    &  &  & 1 & \frac12 &&& \\
    & & &  \ddots & 0  & \frac12 & & \\
    &&&& \ddots & 1 & &\\
    &&&&&& \ddots & \\
    \frac\lambda 2 &&&&&&& 1\\
 0 & \frac{\lambda}{2} & &&&&& 0 \\
\end{array}
\right).\nn
\ee
For any solution $\vec{y}=(y_1=y,\,y_2,\dots,\,y_m)^T$ to \eqref{gen-airy} we have
\bea
\&\& y^{(2n-1)}=\lambda\,y'+\frac12  \, y, \label{b2}, \label{dn-airy}\\
\&\& y_k=(-1)^{k-1}\,y^{(k-1)},\qquad k=2,\dots, n-1, \\
\&\& y_k=(-1)^{k-2}\,y^{(k-2)},\qquad k=n+2,\dots,2n-1,\\
\&\& 2\,y_n'=y_{n+1}'=(-1)^{n-1}\, y^{(n)},\quad y_{2n}=y^{(2n-2)}-\frac12 \,\lambda \,y,
\eea
Solutions to \eqref{dn-airy} can be represented by a Pearcey-type integral
\be
y(\lambda)=\int_C \,\exp\le(-\frac{x^{4n-2}}{2n-1} -\lambda\,x^2\ri) \, dx
\ee
where $C$ is a suitable contour in the complex $x$-plane. 
\end{example}

\begin{example}[$B_n,\,n\geq 2$] \label{exA7}  Take the matrix realization of $\g$ as in \cite{DS}. 
We have $m=2n+1,$ and $
\g= \{B\in {\rm Mat} (m,\mathbb{C})\,|\, B+ S\eta B \eta S =0\}
$
where $\eta_{ij}=\delta_{i+j,2n+2}\,(1\leq i,j\leq 2n+1),$ $S={\rm diag}\,(1,-1,1,-1,\dots,1,-1).$ 
We have
\be 
\Lambda=\left(
\begin{array}{ccccc}
 0 & 1 & 0 & \cdots & 0 \\
 \vdots & \ddots & \ddots & \ddots & \vdots \\
 0 & 0 & \ddots & \ddots & 0 \\
 \frac{\lambda}{2} & 0 & \ddots & \ddots & 1 \\
 0 & \frac{\lambda}{2} & 0 & \cdots & 0 \\
\end{array}
\right).\nn
\ee
Then any solution $\vec{y}=(y_1=y,\,y_2,\dots,\,y_m)^T$ of \eqref{gen-airy} satisfies
\bea
\&\& y^{(2n+1)}=\lambda\,y'+\frac12  \, y, \label{bn-airy}\\
\&\& y_{k}=(-1)^{k-1}\,y^{(k-1)},~k=2,\dots,m-1,\\
\&\& y_m=(-1)^{m-1} \, y^{(m-1)}-\frac\lambda 2 \, y.
\eea
Solutions to \eqref{bn-airy} can be represented by the following Pearcey-type integral  
\be
y(\lambda)=\int_C  \, \exp\le(-\frac{x^{4n+2}}{2n+1} -\lambda\,x^2\ri) \, dx
\ee
where $C$ is a suitable contour in the complex $x$-plane. 
\end{example}

\begin{example}[$E_6$]
We take the matrix realization the same as in \cite{DLZ}, where $m=27.$ The matrices $I_+$ and $E_{-\theta}$ can also be read off from \cite{DLZ}.
The corresponding generalized Airy system reduces to two linear ODEs for $y_1,\,y_6.$ Denote by $u(x),\,v(x)$ the Laplace--Borel transforms of $y_1,\,y_6$ respectively:
\be
y_1(\lambda)=\int_C\, u(x)\, e^{-\lambda \, x}\,dx,\qquad y_6(\lambda)=\int_C\, v(x)\, e^{-\lambda \, x}\,dx.
\ee
Then $u,v$ satisfy the following system of ODEs
\be
\le(\begin{array}{c}
u'\\
v'\\
\end{array}\ri)=\frac1x \,\le(\begin{array}{cc}
-\frac13 & 0\\
0&-\frac23\\
\end{array}\ri)
\le(\begin{array}{c}
u\\
v\\
\end{array}\ri)+\le(\begin{array}{cc}
5\,x^{12} & -\frac{26}3\, x^8\\
-\frac{26}9\, x^{16} &5\,x^{12}\\
\end{array}\ri)
\le(\begin{array}{c}
u\\
v\\
\end{array}\ri).
\ee
It follows that
\bea
\&\& -27 \, x^2 \, u''(x)+(270 \, x^{14} +189 \,x)\, u'+(x^{26} +675  \, x^{13} +75) \, u=0,\label{u-e6}\\
\&\& -27 \, x^2 \, v''(x)+(270 \, x^{14} +405 \, x)\, v'+(x^{26} -405 \, x^{13} +300) \, v=0.
\eea
Solving \eqref{u-e6} we have
\be
u=c_1 \, x^{-\frac13}\, e^{\frac{5 \, x^{13}}{13}} \, _0F_1\left(;\frac{2}{3};-\frac{x^{26}}{27}\right)+ c_2 \,x^{\frac{25}3} \, e^{\frac{5 \, x^{13}}{13}} \, _0F_1\left(;\frac{4}{3};-\frac{x^{26}}{27}\right)
\ee
where $c_1,c_2$ are arbitrary constants.
\end{example}


Applying the saddle point technique to the above integral representations one can derive asymptotic expansions for the generalized Airy functions. In principle, Proposition \ref{A} can be used for computing the asymptotic expansions of solutions to the topological ODE. However, the methods explained in the main part of the present paper seem to be more efficient for obtaining such expansions for regular solutions.

\begin{remark} For the original construction \cite{Kontsevich} of the Witten--Kontsevich tau-function it was introduced a matrix analogue of the Airy integral. More recently in \cite{FV} the Kontsevich's matrix Airy function was generalized for an arbitrary compact Lie group. For $\g=sl_{n+1}(\mathbb C)$ the generalized matrix Airy function appears in \cite{dijk} in representation of the tau-function of the Drinfeld--Sokolov hierarchy associated with the $r$-spin intersection numbers on $\overline{\mathcal M}_{g,N}$, $n=r-1$ (see above). It is not clear whether the generalized matrix Airy functions of \cite{FV} for other simple Lie algebras can be used for computing the FJRW intersection numbers.
\end{remark}

\end{appendices}

~~~~
~~~~

\noindent Marco Bertola

\noindent Department of Mathematics and
Statistics, Concordia University, 1455 de Maisonneuve W., Montr\'eal, Qu\'ebec,  H3G 1M8,
Canada

\noindent SISSA, via Bonomea 265, Trieste 34136, Italy

\noindent Centre de recherches math\'ematiques, Universit\'e de Montr\'eal, C.~P.~6128, succ. centre ville, Montr\'eal,
Qu\'ebec, H3C 3J7, Canada

\noindent marco.bertola@concordia.ca, mbertola@sissa.it, bertola@crm.umontreal.ca

~~~~~
~~~~~

\noindent Boris Dubrovin

\noindent SISSA, via Bonomea 265, Trieste 34136, Italy


\noindent dubrovin@sissa.it

~~~~~
~~~~~

\noindent Di Yang

\noindent SISSA, via Bonomea 265, Trieste 34136, Italy

\noindent dyang@sissa.it

\end{document}